\pgfplotsset{compat=newest}
\let\oldbrace\{
\def\{{\oldbrace\kern0.5pt}
\def\diag{\mathop{\rm diag}\nolimits}%
\def\rank{\mathop{\rm rank}\nolimits}%
\def\nullspace{\mathop{\rm null}\nolimits}%
\def\Span{\mathop{\rm span}\nolimits}%
\newcommand{\nn}{\nonumber}
\newcommand{\MAC}{\mathsf{MAC}}
\newcommand{\LMAC}{\mathsf{LMAC}}
\newcommand{\CF}{\mathsf{CF}}
\newcommand{\q}{\mathsf{q}}
\newcommand{\Fq}{\Field_\q}
\newcommand{\Fqh}{\hat{\Field}_\q}
\newcommand{\Ac}{\mathcal{A}}
\newcommand{\Cc}{\mathcal{C}}
\newcommand{\Ec}{\mathcal{E}}
\newcommand{\Ic}{\mathcal{I}}
\newcommand{\Kc}{\mathcal{K}}
\newcommand{\Lc}{\mathcal{L}}
\newcommand{\Mc}{\mathcal{M}}
\newcommand{\Sc}{\mathcal{S}}
\newcommand{\Tc}{\mathcal{T}}
\newcommand{\Xc}{\mathcal{X}}
\newcommand{\Yc}{\mathcal{Y}}
\newcommand{\Zc}{\mathcal{Z}}
\newcommand{\lv}{\boldsymbol{l}}
\newcommand{\mv}{\boldsymbol{m}}
\newcommand{\pen}{{P_e^{(n)}}}
\newcommand{\aep}{{\mathcal{T}_{\epsilon}^{(n)}}}
\newcommand{\aepvar}{{\mathcal{T}_{\epsilon'}^{(n)}}}
\newcommand{\Rh}{{\hat{R}}}
\newcommand{\Wh}{{\hat{W}}}
\newcommand{\uh}{{\hat{u}}}
\newcommand{\Rt}{{\tilde{R}}}
\def\d{\delta}
\def\e{\epsilon}
\let\P\relax
\DeclareMathOperator\P{\sf P}
\newcommand{\U}{\mathrm{Unif}}
\newcommand{\Integer}{\mathbb{Z}}
\newcommand{\Field}{\mathbb{F}}
\DeclarePairedDelimiter{\ceil}{\lceil}{\rceil}
\newcommand{\tra}{\mathsf{T}}
\newtheorem{theorem}{Theorem}
\newtheorem{lemma}{Lemma}
\newtheorem{corollary}{Corollary}
\theoremstyle{definition}
\newtheorem{example}{Example}
\newtheorem{remark}{Remark}
\newcommand{\Rmax}{R_{\text{max}}}
\newcommand{\Rtmax}{\Rt_{\text{max}}}
\newcommand{\mub}{{\sf m}}
\newcommand{\Gb}{\mathsf{G}}
\newcommand{\joint}{\mathrm{joint}}
\newcommand{\seq}{\mathrm{seq}}
\newcommand{\As}{\mathsf{A}}
\newcommand{\Bs}{\mathsf{B}}
\newcommand{\Cs}{\mathsf{C}}
\newcommand{\Ds}{\mathsf{D}}
\newcommand{\Is}{\mathsf{I}}
\newcommand{\Ib}{\mathsf{I}}
\newcommand{\Qb}{\mathsf{Q}}
\newcommand{\Mub}{\mathsf{M}}
\newcommand{\zerob}{\boldsymbol{0}}
\newcommand{\Ab}{\boldsymbol{a}}
\newcommand{\ab}{\boldsymbol{a}}
\newcommand{\hb}{\boldsymbol{h}}
\newcommand{\eb}{\textbf{\sf e}}
\newcommand{\Tr}{\mathsf{T}}
\newcommand{\Fs}{\mathsf{F}}
\begin{document}

\title{Towards an Algebraic Network Information Theory: Simultaneous Joint Typicality Decoding}

\author{Sung Hoon Lim, Chen Feng, Adriano Pastore, Bobak Nazer, Michael Gastpar
\thanks{This paper was presented in part at the 2017 IEEE International Symposium on Information Theory.}
\thanks{Sung Hoon Lim is with the Korea Institute of Ocean Science and Technology, Busan, Korea (e-mail: shlim@kiost.ac.kr).}
\thanks{Chen Feng is with the School of Engineering, The University of British Columbia, Kelowna, BC, Canada (e-mail: chen.feng@ubc.ca).}
\thanks{Adriano Pastore is with the Centre Tecnol\`{o}gic de Telecomunicacions de Catalunya (CTTC/CERCA), Avinguda Carl Friedrich Gauss 7, 08860 Castelldefels, Spain (e-mail: adriano.pastore@cttc.cat).}
\thanks{Bobak Nazer is with the Department of Electrical and Computer Engineering, Boston University, Boston, MA (e-mail: bobak@bu.edu).}
\thanks{Michael Gastpar is with the School of Computer and Communication Sciences, Ecole Polytechnique F\'ed\'erale, 1015 Lausanne, Switzerland (e-mail: michael.gastpar@epfl.ch).}
}

\maketitle

\begin{abstract}
Consider a receiver in a multi-user network that wishes to decode several messages. Simultaneous joint typicality decoding is one of the most powerful techniques for determining the fundamental limits at which reliable decoding is possible. This technique has historically been used in conjunction with random i.i.d.~codebooks to establish achievable rate regions for networks. Recently, it has been shown that, in certain scenarios, nested linear codebooks in conjunction with ``single-user'' or sequential decoding can yield better achievable rates. For instance, the compute--forward problem examines the scenario of recovering $L \le K$ linear combinations of transmitted codewords over a $K$-user multiple-access channel (MAC), and it is well established that linear codebooks can yield higher rates. Here, we develop bounds for simultaneous joint typicality decoding used in conjunction with nested linear codebooks, and apply them to obtain a larger achievable region for compute--forward over a $K$-user discrete memoryless MAC. The key technical challenge is that competing codeword tuples that are linearly dependent on the true codeword tuple introduce statistical dependencies, which requires careful partitioning of the associated error events. 
\end{abstract}
\begin{IEEEkeywords}
Compute--forward, joint decoding, linear codes, multiple-access channel
\end{IEEEkeywords}


\section{Introduction} \label{sec:intro}
For several decades, decode--forward~\cite{Cover--El-Gamal1979}, compress--forward~\cite{Cover--El-Gamal1979}, and amplify--forward~\cite{Schein--Gallager2000} have served as the fundamental building blocks of transmission strategies for relay networks. These three relaying strategies were initially developed on canonical network models such as the relay channel and diamond relay network using random independent and identically distributed (i.i.d.)~codebooks and joint typicality decoding arguments. Subsequently, these strategies were generalized to $N$-user relay networks~\cite{Kramer--Gastpar--Gupta2005, Lim--Kim--El-Gamal--Chung2011, Yassaee--Aref2011, Hou--Kramer2016, El-Gamal--Kim2011, Lim--Kim--Kim2017, Minero--Lim--Kim2015, Maric--Goldsmith--Medard2012} that also relied upon random i.i.d.~codebooks and joint typicality decoding. 

Beginning with the many-help-one source coding work of K\"orner and Marton~\cite{Korner--Marton1979} followed by a series of recent papers~\cite{Wilson--Narayanan--Pfister--Sprintson2010,Nam--Chung--Lee2010,Nazer--Gastpar2011,Niesen--Whiting2012,Song--Devroye2013,Hong--Caire2013,Ren--Goseling--Weber--Gastpar2014,Bresler--Parekh--Tse2010,Motahari--Gharan--Maddah-Ali--Khandani2014,Niesen--Maddah-Ali2013,Ordentlich--Erez--Nazer2014,Shomorony--Avestimehr2014,Padakandla--Sahebi--Pradhan2016,Krithivasan--Pradhan2009,Krithivasan--Pradhan2011,Wagner2011,Lalitha--Prakash--Vinodh--Kumar--Pradhan2013, Yang--Xiong2014,Philosof--Zamir2009,Philosof--Zamir--Erez--Khisti2011,Wang2012,Padakandla--Pradhan2017,Padakandla--Pradhan2018,He--Yener2014,Vatedka--Kashyap--Thangaraj2015,Xie--Ulukus2014}, it has been observed that random i.i.d.~codebooks may not suffice to attain the capacity region of certain networks. Instead, codes with some form of \textit{algebraic structure}, such as nested linear or lattice codes, can sometimes attain larger rate regions. In the context of relaying, this has led to a fourth relaying paradigm known as compute--forward~\cite{Wilson--Narayanan--Pfister--Sprintson2010,Nam--Chung--Lee2010,Nazer--Gastpar2011,Niesen--Whiting2012,Song--Devroye2013,Hong--Caire2013,Ren--Goseling--Weber--Gastpar2014}. The key idea is that, if all users employ the same linear or lattice codebook, then linear combinations of codewords are themselves codewords, and can often be recovered at higher rates as compared to recovering one (or more) codewords. After the relays recover linear combinations, they forward them to the destinations, which then obtain their desired codewords by solving a system of linear equations. This strategy was originally proposed for Gaussian channels with equal rates and power constraints using (random) nested lattice codes combined with ``single-user'' lattice decoding~\cite{Nazer--Gastpar2011}. It was subsequently generalized to include unequal power constraints and rates as well as sequential decoding~\cite{Ordentlich--Erez--Nazer2014,Nazer--Cadambe--Ntranos--Caire2016,Zhu--Gastpar2017}.

Much of the prior work that demonstrates the rate gains of random linear or lattice codes over random i.i.d.~codes has focused on either binary or Gaussian channels. Inspired by these examples, there is now a 
concerted effort to generalize these results into proof techniques with the objective to develop an \textit{algebraic network information theory} based on codes with algebraic structure. (See the textbook of El Gamal and Kim for the state-of-the-art rate regions for random i.i.d.~codes~\cite{El-Gamal--Kim2011}.) As demonstrated by Padakandla and Pradhan~\cite{Padakandla--Sahebi--Pradhan2016,Padakandla--Pradhan2017,Padakandla--Pradhan2018}, random nested linear codes, when combined with joint typicality encoding and decoding, can be used to generalize the aforementioned examples to discrete memoryless networks. The key insight is that, although a straightforward application of a random linear codebook will lead to a uniform input distribution, joint typicality encoding (i.e. multicoding) can be used to shape a random nested linear codebook to induce any input distribution. This phenomenon was independently discovered in the context of sparse linear codes by Miyake~\cite{Miyake2010}.

In this paper, we develop techniques for bounding the error probability for \textit{simultaneous joint typicality decoding} when used in conjunction with nested linear codebooks. The main technical difficulty is that (exponentially many) competing codewords are linearly dependent on the true codewords, and thus create statistical dependencies that are not handled by classical bounding techniques. We partition error events based on a particular rank criterion, which in turn enables us to characterize the rate penalties that stem from these linear dependencies. We apply our bounds towards deriving an achievable rate region for the general compute--forward problem of recovering $L \leq K$ linear combinations over a $K$-user discrete memoryless MAC. In prior work, we derived an achievable region for the special case of $K = 2$ users and, in the process, generalized technical lemmas from network information theory (e.g., packing, covering, Markov) to apply to nested linear codes~\cite{Lim--Feng--Pastore--Nazer--Gastpar2018}. We employ these lemmas as part of our derivations for the $K \geq 2$ setting, and find that our achievable rate region improves upon our previous results for the $K=2$ case. Overall, simultaneous decoding has played an important role in the development of many results in classical network information theory, and the simultaneous decoding bounds developed herein may also prove useful beyond the compute--forward setting.

The rest of the paper is organized as follows. In the next section, we formally give the problem statement. In Section~\ref{sec:mainresults}, we state our main results on the joint decoding rate region for computing multiple linear combinations (Theorem~\ref{thm:LK-joint}). In Section~\ref{sec:proof}, we give the proof of Theorem~\ref{thm:LK-joint} and finally, in Section~\ref{sec:discussions}, we conclude with some discussions. 

We closely follow the notation in~\cite{El-Gamal--Kim2011}. Let $\Xc$ denote a discrete set and $x^n$ a length-$n$ sequence whose elements belong to $\Xc$. 
We use uppercase letters to denote random variables. For instance, $X$ is a random variable that takes values in $\Xc$. We follow standard notation for probability measures. Specifically, we denote the probability of an event $\Ac$ by $\P\{\Ac\}$ and use $p_X(x)$ to denote probability mass functions (pmf).

For a discrete set $\Xc$, the type of $x^n$ is defined to be $\pi(x | x^n) := \big| \{ i: x_i = x \} \big| /n$ for $x \in \Xc$. Let $X$ be a discrete random variable over $\Xc$ with probability mass function $p_X(x)$. For any parameter $\e \in (0,1)$,  we define the set of $\e$-typical $n$-sequences
 $x^n$ (or the typical set in short)~\cite{Orlitsky--Roche2001} as
$\aep(X) = \{ x^n : | \pi(x|x^n) - p_X(x) | \le \e p_X(x)
\text{ for all } x \in \Xc \}$.  We use $\delta(\e) > 0$ to denote a generic function
 of $\e > 0$ that tends to zero as $\e \to 0$. One notable departure is that we define sets of message indices starting at zero rather than one with shorthand $[n] := \{0,\ldots, n-1\}$. We also define $[1:n] = \{1,\ldots,n\}$ and reserve $\Kc=[1:K]$ to denote the full set of users.

We use the notation $\Fq$ to denote a finite field of order $\q$. We denote deterministic row vectors with lowercase, boldface font (e.g., $\ab \in \Fq^K$). Note that row vectors can also be written as a sequence (e.g., $u^n \in \Fq^n$). We will denote random sequences using uppercase font (e.g., $U^n \in \Fq^n$). Random matrices will be denoted with uppercase, boldface font (e.g., $\mathbf{G} \in \Fq^{n \times \kappa}$) and we will use uppercase, sans-serif font to denote realizations of random matrices (e.g., $\mathsf{G} \in \Fq^{n \times \kappa}$) or deterministic matrices. We denote by $\eb_k\in\Fq^K$ the standard basis (row) vector where the $k$-th element is $1$ and the rest of the elements are all zero.

Define the matrix $\Ib(\Sc)\in\Fq^{|\Sc|\times K}$ as a subset of the identity matrix $\Ib \in \Fq^{K \times K}$ composed of the standard basis vectors $\eb_k$, $k\in\Sc$, i.e., the rows of $\Is(\Sc)$ are $\eb_k$, $k\in\Sc$. Likewise, for any matrix $\As$, we define $\As(\Sc)$ as the submatrix containing only those rows of $\As$ whose index is in $\Sc$, i.e., $\As(\Sc)=\Is(\Sc)\As$. Specifically for vectors, we will frequently use the shorthand $\As_k$ for $\As(\{k\})$.
We denote the \emph{row} span of $\As$ by $\Span(\As)$ as well as its nullspace by $\nullspace(\As)$. 
Throughout the paper, we assume that all rates $R_k$, $k\in\Kc$ are non-negative and are subject to constraints $R_k\ge 0$.

We define an \emph{empty matrix} as a matrix with zero rows or zero columns (or both). We will assume that an empty matrix is full rank with rank $0$. The product of an empty matrix and another matrix is an empty matrix, e.g., if $\As$ is a $0 \times 3$ empty matrix and $\Bs$ is a $3 \times 5$ matrix, then $\As \Bs$ is an empty matrix of size $0 \times 5$.

%
%

\section{Problem Statement} \label{sec:problemstatement}

We now give a formal problem statement for compute--forward.
Consider the $K$-user discrete memoryless multiple-access channel (DM-MAC)
\begin{equation}
(\Xc_1\times\cdots\times \Xc_K, p_{Y|X_1,\ldots,X_K}, \Yc)
\end{equation}
which consists of $K$ input alphabets $\Xc_k$, $k\in[1:K]$, one receiver alphabet $\Yc$, and a collection of conditional pmfs $p_{Y|X_1,\ldots,X_K}$. See Figure~\ref{fig:probstatement} for an illustration. 

Consider a finite field $\Fq$ and let $\As_1,\ldots,\As_L \in \Fq^K$ denote coefficient vectors. Define
\begin{align}
\As = \begin{bmatrix} \As_1 \\ \vdots \\ \As_L \end{bmatrix} \in \Fq^{L\times K}
\end{align}
as a coefficient matrix, with $L \leq K$.

A $(2^{nR_1},\ldots,2^{nR_K},n; \As)$ code for compute--forward consists of
\begin{itemize}
\item $K$ message sets $[2^{nR_k}]$, $k\in[1:K]$
\item $K$ encoders, where encoder $k$ maps each message $m_k \in [2^{nR_k}]$ to a pair of sequences $(u^n_k, x^n_k)(m_k)\in\Fq^n\times\Xc_k^n$ such that $u^n_k(m_k)$ is \textit{injective},
\item $L$ linear combinations for each message tuple $(m_1,\ldots, m_K)$
\begin{equation*}
    w^n_{\As}(m_1,\ldots,m_K)
    = \begin{bmatrix}
        w^n_{\As_1}(m_1,\ldots,m_K)\\
        \vdots\\
        w^n_{\As_L}(m_1,\ldots,m_K)
    \end{bmatrix}
    = \As \begin{bmatrix}
        u^n_{1}(m_1)\\
        \vdots \\
        u^n_{K}(m_K)
    \end{bmatrix},
\end{equation*}
where additions and multiplications are defined over the vector space $\Fq^n$, and
\item a decoder that assigns estimates $(\hat{w}^n_{\As_1},\ldots,\hat{w}^n_{\As_L}) \in \Fq^n \times \cdots \times \Fq^n$ to each received sequence $y^n \in \Yc^n$.
\end{itemize}

We assume that each message $M_k$ is independently and uniformly drawn from $[2^{nR_k}]$. The average probability of error is defined as 
\begin{align*}
    \pen = \P\big\{(\hat{W}^n_{\As_1},\ldots,\hat{W}^n_{\As_L}) \neq (W^n_{\As_1},\ldots,W^n_{\As_L})\big\}.
\end{align*}
We say that a rate tuple $(R_1,\ldots,R_K)$ is achievable for computing the linear combinations with coefficient matrix $\As$ if there exists
a sequence of $(2^{nR_1},\ldots,2^{nR_K},n; \As)$ codes such that $\lim_{n\rightarrow \infty} \pen = 0$. Overall, the goal is for the receiver to recover the linear combinations
\begin{align}
	w_{\As_\ell}^n(m_1,\ldots, m_K)
	= \sum_{k=1}^K a_{\ell,k} u_k^n(m_k), \qquad \ell\in[1:L], \label{eq:linear-combination}
\end{align}
where $a_{\ell,k}$ is the $(\ell,k)$-th entry of $\As$ and the multiplication and summation operations are over $\Fq$. 

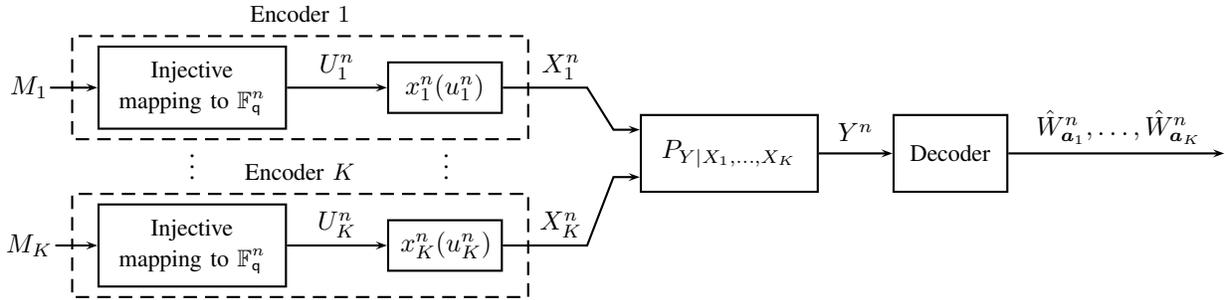
\begin{figure}[ht]
\begin{center}
\psset{unit=0.70mm}
\begin{pspicture}(-42,-15)(189,41)
\rput(-38,25){$M_1$}  \psline{->}(-34,25)(-25,25)
\psframe[linestyle=dashed](-30,15)(57,35) \rput(13.5,39){\small{Encoder $1$}}
\psframe(-25,17)(11,33) \rput(-7,28){{\small{Injective}}} \rput(-7,22){{\small{mapping to $\Fq^n$}}}
\psline{->}(11,25)(30,25) \rput(20.5,29){$U_1^n$}
\psframe(30,20)(52,30) \rput(41,25){{$x_1^n(u_1^n)$}}
\rput(63,29){$X_1^n$} \psline{->}(52,25)(68,25)(73,17)(78,17)

\rput(-7,11.5){$\vdots$}
\rput(41.25,11.5){$\vdots$}

\rput(0,-30){
\rput(-38,25){$M_K$}  \psline{->}(-33,25)(-25,25)
\psframe[linestyle=dashed](-30,15)(57,35) \rput(13.5,39){\small{Encoder $K$}}
\psframe(-25,17)(11,33) \rput(-7,28){{\small{Injective}}} \rput(-7,22){{\small{mapping to $\Fq^n$}}}
\psline{->}(11,25)(30,25) \rput(20.5,29){$U_K^n$}
\psframe(30,20)(52,30) \rput(41.25,25){{$x_K^n(u_K^n)$}}
\rput(63,29){$X_K^n$} \psline{->}(52,25)(68,25)(73,38)(78,38)
}

\psframe(78,5)(112,20) \rput(95,12.5){$P_{Y|X_1,\ldots,X_K}$}
\psline{->}(112,12.5)(126,12.5) \rput(119,16.5){$Y^n$}

\psframe(126,5)(148,20) \rput(137,12.5){\small{Decoder}}
\psline{->}(148,12.5)(189,12.5)\rput(169,17.5){$\hat{W}_{\ab_1}^n,\ldots,\hat{W}_{\ab_K}^n$}

\end{pspicture}
\end{center}
\caption{Block diagram of the compute--forward problem. Each transmitter has a message $M_k$ drawn independently and uniformly from $[2^{nR_k}]$ that is injectively mapped to a representative sequence $U^n_k(M_k)$ over a finite vector space $\Fq^n$, and then into a channel input $X^n_k(M_k) \in \mathcal{X}_k^n$. The $K$ channel inputs pass through a memoryless MAC described by conditional probability distribution $P_{Y|X_1,\ldots,X_K}$ resulting in channel output $Y^n$. Finally, the decoder makes estimates $\hat{W}_{\ab_1}^n,\ldots,\hat{W}_{\ab_K}^n$ of the linear combinations $W_{\ab_\ell}^n(M_1,\ldots,M_K) = \sum_{k} a_{\ell, k} U_k^n(M_k)$.} \label{fig:probstatement}
\end{figure}

\begin{remark}
The role of the mappings $u^n_k(m_k)$ is to embed the messages into the vector space $\Fq^n$, so that it is possible to take linear combinations. The restriction to injective mappings ensures that, given enough linear combinations, it is possible to solve the system of linear equations and recover the original messages (subject to appropriate rank conditions). 
\end{remark}

%
%

\section{Main results}\label{sec:mainresults}

In this section, we present our main results. 
We begin by establishing a joint-decoding-based achievable rate region for computing $L$ linearly independent combinations for a discrete memoryless MAC.

For a coefficient matrix $\Fs \in \Fq^{L_\Fs \times K}$, let us define the notation
\begin{equation}
    W_\Fs = \Fs \begin{bmatrix} U_1 & \cdots & U_K \end{bmatrix}^\tra
\end{equation}
for a vector of linear combinations of $(U_1, \ldots, U_K) \in \Fq^K$. The following theorem establishes our main result on computing $L$ linear combinations.

\begin{theorem}[Compute--forward for the DM-MAC] \label{thm:LK-joint}
A rate tuple $(R_1,\ldots, R_K)$ is achievable for recovering the $L$ linear combinations with coefficient matrix $\As \in \Fq^{L \times K}$ if, for some pmf $\prod_{k=1}^K p(u_k)$ and symbol mappings $x_k(u_k)$, $k\in\Kc$, it is contained in 
\begin{equation}\label{eq:region}
    \mathscr{R}_{\joint}
    = \bigcup_{\Bs} \bigcap_{\Cs} \bigcup_{\Sc} \bigcap_{\Tc} \Bigl\{ (R_1, \dotsc, R_K) \in \mathbb{R}_+^K \colon \textstyle\sum_{k\in\Tc} R_k < H(U(\Tc)) - H(W_{\Bs} | Y, W_{\Cs\Bs}) \Bigr\}
\end{equation}
where the set operations are over all tuples $(\Bs,\Cs,\Sc,\Tc)$ satisfying the following constraints:
\begin{enumerate}
\item $\Bs \in \Fq^{L_\Bs\times K}$ runs over all full-rank matrices such that $1 \leq L_\Bs \leq K$ and $\Span(\Bs) \supseteq \Span(\As)$,
\item $\Cs \in \Fq^{L_\Cs\times L_\Bs}$ runs over all full-rank matrices (including empty matrices) such that $0 \le L_\Cs < L_\Bs$,
\item $\Sc\subseteq [1:L_\Bs]$ runs over all index sets of size $|\Sc|=L_\Bs-L_\Cs$ satisfying
\begin{equation}\label{eq: KL-joint-condition1}
    \rank\left( \begin{bmatrix}
        \Cs \\ 
        \Ib(\Sc)
    \end{bmatrix} \right)
    = L_\Bs,
\end{equation}
\item $\Tc\subseteq\Kc$ runs over all index sets of size $|\Tc|=L_\Bs-L_\Cs$ satisfying
\begin{equation}\label{eq: KL-joint-condition2}
    \rank\left( \begin{bmatrix}
        \Bs(\Sc) \\
        \Ib(\Kc\setminus \Tc)
    \end{bmatrix} \right)
    = K.
\end{equation}
\end{enumerate}
\end{theorem}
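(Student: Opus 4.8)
The plan is to use a random nested linear code construction with joint typicality encoding (multicoding) to induce the product input pmf $\prod_k p(u_k)$, and then analyze a simultaneous joint typicality decoder that searches for the (unique) tuple of linear combinations $w_{\As}^n$ compatible with $y^n$. Concretely, I would let each user $k$ draw its codeword from a common nested linear code over $\Fq^n$: a fine codebook generated by a random matrix $\mathbf{G}$ together with a dither, with an outer bin index so that the effective message rate is $R_k$ while the underlying linear code has rate large enough to provide a typical representative $u_k^n$ for $x_k^n$. The decoder declares $\hat{w}_{\As}^n$ if it is the unique value of $\As \tilde{u}^n$ over all $\tilde u^n = (\tilde u_1^n,\ldots,\tilde u_K^n)$ with each $\tilde u_k^n$ in user $k$'s codebook such that $(\tilde u^n, x^n(\tilde u^n), y^n)$ is jointly typical. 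The covering/packing/Markov lemmas for nested linear codes established in \cite{Lim--Feng--Pastore--Nazer--Gastpar2018} handle the encoding (covering) step and the "genie" step showing the true tuple is jointly typical with high probability, so the crux is the error event where some competing tuple $\tilde u^n$ with $\As \tilde u^n \ne \As u^n$ is jointly typical with $y^n$.

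The key idea — and the reason for the elaborate union/intersection structure in \eqref{eq:region} — is to partition this competing-codeword error event according to the \emph{algebraic relationship} between the competing tuple and the true tuple. First I would fix the matrix $\Bs$: since $\Span(\Bs) \supseteq \Span(\As)$, recovering $w_{\Bs}^n$ suffices to recover $w_{\As}^n$, and the outer union over $\Bs$ reflects that the decoder (in the analysis) may be thought of as targeting any such enlargement. For a fixed $\Bs$, a competing tuple either agrees with the truth on some subspace of the linear combinations $w_{\Bs}^n$ and disagrees on a complementary part; the matrix $\Cs$ records (a complement of) the subspace on which they agree, i.e. the competitor shares $w_{\Cs\Bs}^n$ with the truth but differs in the remaining $|\Sc| = L_\Bs - L_\Cs$ coordinates indexed by $\Sc$ (the rank condition \eqref{eq: KL-joint-condition1} makes $\bigl[\begin{smallmatrix}\Cs\\ \Ib(\Sc)\end{smallmatrix}\bigr]$ a change of basis, so "$\Cs$ plus coordinates $\Sc$" spans everything). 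Then $\Tc$ indexes which users' codewords must genuinely change to realize this disagreement: the rank condition \eqref{eq: KL-joint-condition2} guarantees that holding users $\Kc\setminus\Tc$ fixed still allows $\Bs(\Sc)$ to vary freely, so $\Tc$ is exactly the set of "active" users for this error sub-event, and conditioning on $W_{\Cs\Bs}$ (which the competitor shares) leaves the competitor's $U(\Tc)$-part uniform with entropy $H(U(\Tc))$ worth of candidates, while $H(W_\Bs\mid Y, W_{\Cs\Bs})$ is the residual uncertainty the channel leaves.

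With this partition fixed, the counting is routine: by the linearity of the code and the joint typicality lemma for nested linear codes, the number of competing tuples in a given $(\Bs,\Cs,\Sc,\Tc)$-class is $\doteq 2^{n\sum_{k\in\Tc}R_k}$ (only the $\Tc$-users' indices are free, after accounting for the multicoding rate via $H(U(\Tc))$), and each is jointly typical with $y^n$ with probability $\doteq 2^{-n[H(W_\Bs\mid W_{\Cs\Bs}) - H(W_\Bs\mid Y,W_{\Cs\Bs})]}$ after the shared part $W_{\Cs\Bs}$ is conditioned out — yielding the bound $\sum_{k\in\Tc}R_k < H(U(\Tc)) - H(W_\Bs\mid Y, W_{\Cs\Bs})$ for each $(\Sc,\Tc)$; taking a union bound over $\Cs,\Sc,\Tc$ (which are polynomially/constantly many in relevant parameters, so do not cost rate) and intersecting, then optimizing over $\Bs$, gives \eqref{eq:region}. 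The main obstacle is making the probability estimate for a competing tuple rigorous: because the competitor is linearly dependent on the true codeword through $\mathbf{G}$, its joint distribution with $(u^n, y^n)$ is not the naive product form, and one must carefully argue — using that the pair $(w_{\Cs\Bs}^n, w_{\Bs(\Sc)}^n)$ of a \emph{random} competitor is, conditionally on the shared part, uniform over a coset and the Markov-lemma-type estimates from \cite{Lim--Feng--Pastore--Nazer--Gastpar2018} apply — that the effective "number of candidates times per-candidate probability" bound survives. I would also need to verify that the partition is exhaustive: every competing tuple with $\As\tilde u^n \ne \As u^n$ falls into at least one class $(\Bs,\Cs,\Sc,\Tc)$ satisfying the four constraints, which is a linear-algebra check that the disagreement subspace admits a complementary-coordinate description of the stated rank, and that the active user set satisfies \eqref{eq: KL-joint-condition2}; this bookkeeping, rather than any single hard inequality, is where the care is concentrated.
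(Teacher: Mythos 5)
Your proposal follows essentially the same route as the paper's proof: a random nested linear code with multicoding to shape the input pmf, the standard covering/Markov lemmas from~\cite{Lim--Feng--Pastore--Nazer--Gastpar2018} for the encoding and true-tuple typicality events, and---for the key competing-codeword event---a partition of the nonzero sum-index tuples $\Mub_\Bs$ by the nullspace generated by $\Cs$, paired with a cardinality bound indexed by $\Tc$ and a joint typicality lemma that conditions out the shared combinations $W_{\Cs\Bs}$. This is exactly the decomposition the paper carries out (its Lemmas~\ref{lem:card1} and~\ref{lem:jt-lemma}), and the two technical points you flag as needing care---the statistical dependence of competitors on the true codewords through $\mathbf{G}$, and the exhaustiveness of the $(\Bs,\Cs,\Sc,\Tc)$ partition---are precisely where the paper concentrates its effort.
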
 The coding strategy and error analysis are provided in Section~\ref{sec:proof}. In the following, we give some remarks on Theorem~\ref{thm:LK-joint}.

\begin{remark}
Note that the rate region in curly braces in~\eqref{eq:region} is a function of $\Bs$, $\Cs$ and $\Tc$ but not of $\Sc$. Rather, in the context of the intersection $\cap_\Tc$, $\Tc$ runs over a set that depends on $\Sc$ (cf.~\eqref{eq: KL-joint-condition2}).
\end{remark}

\begin{remark}
By the Steinitz Lemma~\cite{Katznelson--Katznelson2008}, there always exists at least one $\Sc\subseteq\Kc$ such that \eqref{eq: KL-joint-condition2} is satisfied.
\end{remark}

\begin{remark}
Without loss of generality, in the evaluation of $\mathscr{R}_{\joint}$ we can restrain $\Cs$ to being in {\em reduced row echelon form}~\cite{Meyer2000} since the right-hand side of~\eqref{eq:region} only depends on $\Cs$ via $\Span(\Cs)$. Equivalently, we have that $I(U_\Sc; Y, W_\Cs) =I(U_\Sc; Y, W_{\Cs'})$ for any $\Cs, \Cs'$ such that $\text{span}(\Cs)=\text{span}(\Cs')$ since $\Cs$ and $\Cs'$ are deterministic functions of one another. This simplification can be applied for any of the corollaries of Theorem~\ref{thm:LK-joint} that follow.
\end{remark}

Theorem~\ref{thm:LK-joint} admits a direct generalization to multiple receivers. For instance, assume there are $K$ transmitters that communicate with $N$ receivers across the discrete memoryless channel $p_{Y_1,\ldots,Y_N|X_1,\ldots,X_K}$ and that the $i^{\text{th}}$ receiver observes channel output $Y_i$ and wants the linear combinations with coefficient matrix $\As^{(i)}$. Let $\mathscr{R}_{\joint}^{(i)}$ denote~\eqref{eq:region} evaluated with $\As^{(i)}$ in place of $\As$ and $Y_i$ in place of $Y$. Then, a rate tuple $(R_1,\ldots,R_K)$ is achievable if, for some pmf $\prod_{k=1}^K p(u_k)$ and symbol mappings $x_k(u_k)$, $k\in\Kc$, it is contained in $\bigcap_{i = 1}^N \mathscr{R}_{\joint}^{(i)}$.

The rate region in Theorem~\ref{thm:LK-joint} can be easily extended to include a time-sharing random variable using standard arguments~\cite{El-Gamal--Kim2011}. Note that if there are multiple receivers, then the intersection over rate regions should be taken before the convexification due to time-sharing.

The following corollary simplifies Theorem~\ref{thm:LK-joint} for computing one linear combination over a two-user DM-MAC, i.e., $\As\in\Fq^{1\times 2}$ and $K=2$.
In particular, we consider the cases with $\As=[a_1\,\, a_2]$ where $a_1\neq 0$ and $a_2\neq 0$ to avoid degenerate cases. The case when $\rank(\As)=2$ and $K=2$ will be considered afterwards.

\begin{corollary}[Two users, one linear combination]\label{cor:2-user-one-compute}
Consider the case with $K=2$ and $L=1$. A rate pair $(R_1, R_2)$ is achievable for computing one linear combination with respect to the coefficients $\As=[a_1\,\, a_2]$ over a two-user DM-MAC if 
\begin{align}
(R_1, R_2)\in (\mathscr{R}_{\CF} \cup \mathscr{R}_{\LMAC})
\end{align}
for some pmf $p(u_1)p(u_2)$ and symbol mappings $x_1(u_1),x_2(u_2)$, where 
\begin{IEEEeqnarray}{rClrCl}
    \mathscr{R}_{\CF} &=&& \Bigl\{ (R_1, R_2) \colon R_1 &<& H(U_1)-H(W_{\Ab}|Y) \nn\\
        &&& R_2 &<& H(U_2)-H(W_{\Ab}|Y) \Bigr\}, \label{eq:Rcf} \\
    \mathscr{R}_{\LMAC} &=& \rlap{$\,(\mathscr{R}_1\cup\mathscr{R}_2),$} &&&   \label{eq:RLMAC} \\
    \mathscr{R}_k &=&& \Bigl\{ (R_1, R_2) \colon R_1 &<& I(X_1; Y| X_2) \nn\\
        &&& R_2 &<& I(X_2; Y| X_1) \nn\\
        &&& R_1+R_2 &<& I(X_1, X_2; Y) \nn\\
        &&& R_k &<& \min_{\Cs\in\Fqh^{1\times 2}} I(U_k;Y, W_{\Cs}) \Bigr\},
\end{IEEEeqnarray}
and $\Fqh = \Fq\setminus\{0\}$.
\end{corollary}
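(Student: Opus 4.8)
The plan is to derive the corollary as a specialization of Theorem~\ref{thm:LK-joint} to $K=2$, $L=1$, $\As=[a_1\ a_2]$ with $a_1,a_2\neq 0$, by enumerating the finitely many admissible tuples $(\Bs,\Cs,\Sc,\Tc)$ in~\eqref{eq:region}. Since $\mathscr{R}_{\joint}$ is a \emph{union} over the outer matrix $\Bs$, it suffices to show that the branch $L_\Bs=1$ contributes exactly $\mathscr{R}_{\CF}$ and that the branch $L_\Bs=2$ contributes a region containing $\mathscr{R}_{\LMAC}$; the achievability claim $\mathscr{R}_{\CF}\cup\mathscr{R}_{\LMAC}\subseteq\mathscr{R}_{\joint}$ then follows. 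Throughout, I would use the Markov relations $(U_1,U_2)-(X_1,X_2)-Y$, which give $I(U_1,U_2;Y)=I(X_1,X_2;Y)$, $I(U_1;Y\mid U_2)=I(X_1;Y\mid X_2)$, and $I(U_2;Y\mid U_1)=I(X_2;Y\mid X_1)$, together with the fact (third remark after the theorem) that the region depends on $\Bs$ and $\Cs$ only through $\Span(\Bs)$ and $\Span(\Cs)$, so that either matrix may be rescaled or put into reduced row echelon form at will.

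For the branch $L_\Bs=1$: the requirement $\Span(\Bs)\supseteq\Span(\As)$ forces $\Span(\Bs)=\Span(\As)$, so I may take $\Bs=\As$; then $L_\Cs=0$ (hence $\Cs$ is empty and $W_{\Cs\Bs}$ drops out) and $\Sc=\{1\}$ are the only admissible choices. Condition~\eqref{eq: KL-joint-condition2} here requires the $2\times 2$ matrix obtained by stacking $\As$ on top of $\eb_k$ to be full rank, where $k$ is the unique index not in $\Tc$; this holds for $\Tc=\{1\}$ (i.e.\ $k=2$) iff $a_1\neq 0$ and for $\Tc=\{2\}$ (i.e.\ $k=1$) iff $a_2\neq 0$, so both choices are admissible. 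Intersecting over these two choices of $\Tc$ yields $\{R_1<H(U_1)-H(W_\As\mid Y)\}\cap\{R_2<H(U_2)-H(W_\As\mid Y)\}$, which is exactly $\mathscr{R}_{\CF}$ in~\eqref{eq:Rcf}.

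For the branch $L_\Bs=2$: I would take the admissible choice $\Bs=\Ib$ (admissible since $\Span(\Ib)=\Fq^2\supseteq\Span(\As)$), so that $W_\Bs=(U_1,U_2)$ and $W_{\Cs\Bs}=W_\Cs$. Now $L_\Cs\in\{0,1\}$. If $L_\Cs=0$, the only admissible $(\Sc,\Tc)$ is $(\{1,2\},\{1,2\})$, giving $R_1+R_2<H(U_1,U_2)-H(U_1,U_2\mid Y)=I(X_1,X_2;Y)$. If $L_\Cs=1$, write $\Cs=[c_1\ c_2]\neq 0$; condition~\eqref{eq: KL-joint-condition1} admits $\Sc=\{1\}$ iff $c_2\neq 0$ and $\Sc=\{2\}$ iff $c_1\neq 0$, and in either case condition~\eqref{eq: KL-joint-condition2} (now with $\Bs(\Sc)\in\{\eb_1,\eb_2\}$) forces $\Tc=\Sc$. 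The choice $\Sc=\{1\}$ gives $R_1<H(U_1)-H(U_1,U_2\mid Y,W_\Cs)=I(U_1;Y,W_\Cs)$, where I use that $c_2\neq 0$ makes $U_2$ a deterministic function of $(U_1,W_\Cs)$; symmetrically $\Sc=\{2\}$ gives $R_2<I(U_2;Y,W_\Cs)$. In particular $\Cs=[1\ 0]$ yields $R_2<I(U_2;Y,U_1)=I(X_2;Y\mid X_1)$ and $\Cs=[0\ 1]$ yields $R_1<I(X_1;Y\mid X_2)$, while for $\Cs$ with both entries nonzero (i.e.\ $\Cs\in\Fqh^{1\times 2}$) both values of $\Sc$ are admissible, so $\bigcup_\Sc$ produces $\{R_1<I(U_1;Y,W_\Cs)\}\cup\{R_2<I(U_2;Y,W_\Cs)\}$.

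Assembling the intersection $\bigcap_\Cs$ over this branch, the $L_\Bs=2$, $\Bs=\Ib$ contribution equals the multiple-access region $\{R_1<I(X_1;Y\mid X_2)\}\cap\{R_2<I(X_2;Y\mid X_1)\}\cap\{R_1+R_2<I(X_1,X_2;Y)\}$ intersected with $\bigcap_{\Cs\in\Fqh^{1\times 2}}\bigl(\{R_1<I(U_1;Y,W_\Cs)\}\cup\{R_2<I(U_2;Y,W_\Cs)\}\bigr)$. Using the elementary inclusion $\bigcap_i(A_i\cup B_i)\supseteq(\bigcap_i A_i)\cup(\bigcap_i B_i)$ and distributing the intersection with the MAC region over the resulting union, this contribution contains $\mathscr{R}_1\cup\mathscr{R}_2=\mathscr{R}_{\LMAC}$, since $\bigcap_{\Cs\in\Fqh^{1\times 2}}\{R_k<I(U_k;Y,W_\Cs)\}=\{R_k<\min_{\Cs\in\Fqh^{1\times 2}}I(U_k;Y,W_\Cs)\}$. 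Combined with the first branch this proves $\mathscr{R}_{\CF}\cup\mathscr{R}_{\LMAC}\subseteq\mathscr{R}_{\joint}$ and hence the corollary. I expect the main obstacle to be the rank bookkeeping in the second and third steps: correctly identifying, for each $\Cs$, which index sets $\Sc$ satisfy~\eqref{eq: KL-joint-condition1} and which $\Tc$ satisfy~\eqref{eq: KL-joint-condition2}, and then reducing the conditional entropies $H(W_\Bs\mid Y,W_{\Cs\Bs})$ to the mutual informations appearing in $\mathscr{R}_{\CF}$ and $\mathscr{R}_k$. Because the last step uses only a one-sided set inclusion, this argument establishes the stated achievability result but not that $\mathscr{R}_{\joint}$ equals $\mathscr{R}_{\CF}\cup\mathscr{R}_{\LMAC}$.
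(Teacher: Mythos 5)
Your proposal is correct and follows essentially the same route as the paper's own proof in Appendix~A: specialize Theorem~\ref{thm:LK-joint} to $\Bs=\ab$ (yielding $\mathscr{R}_{\CF}$) and $\Bs=\Is$ (yielding the MAC constraints plus the per-$\Cs$ disjunctions), with identical bookkeeping of the admissible $(\Cs,\Sc,\Tc)$. If anything, you are slightly more careful than the paper in the final assembly step, where you make explicit that only the one-sided inclusion $\bigcap_{\Cs}(A_\Cs\cup B_\Cs)\supseteq(\bigcap_{\Cs}A_\Cs)\cup(\bigcap_{\Cs}B_\Cs)$ is used, which is all that achievability of $\mathscr{R}_{\LMAC}$ requires.
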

The proof of Corollary~\ref{cor:2-user-one-compute} is deferred to Appendix~\ref{app:proof_cor1}.

In the following, we explain how our DMC results are related to the lattice compute--forward strategy by Nazer and Gastpar~\cite{Nazer--Gastpar2011} by specializing Corollary~\ref{cor:2-user-one-compute} to the two-user Gaussian MAC, 
\begin{align}\label{eq:Gaussian_MAC}
    Y^n= \hb\left[\begin{array}{c}
    x^n_1\\
    x^n_2
    \end{array}
    \right]+Z^n
\end{align}
where $\hb=[h_{1} ~h_2]$ is the vector of channel gains, the noise $Z^n$ is i.i.d.~$\mathcal{N}(0,1)$, and the channel inputs are subject to average power constraints $\sum_{i=1}^n x^2_{ki} \le nP_k$. The goal is to recover the linear combination with \emph{integer}\footnote{It can be shown that, if the channel coefficients and power constraints are bounded, then we can select a large enough finite field such that any integer-linear combination of codewords (with a positive sum rate) has a corresponding finite field combination. Thus, we can evaluate the rate region by solving a special case of the shortest vector problem, which can be efficiently solved for $K = 2$ by Gauss' algorithm as well as for $K > 2$ by the algorithm proposed in~\cite{Sahraei--Gastpar2017}.} coefficient vector $\As = [a_1 ~a_2] \in \Integer^{1 \times 2}$ again assuming that $a_1 \neq 0$ and $a_2 \neq 0$ to avoid degenerate cases.
In~\cite{Lim--Feng--Pastore--Nazer--Gastpar2018}, we have shown via a discretization method that the rate region $\mathscr{R}_{\CF}$ in Corollary~\ref{cor:2-user-one-compute} can be specialized to the Gaussian case in the form of
\begin{IEEEeqnarray}{rClrCl}
    \mathscr{R}_{\CF} &=&& \Bigl\{ (R_1, R_2) \colon R_1 &<& h(U_1)-h(W_{\As}|Y) + \log\gcd({\As}), \nn\\
        &&& R_2 &<& h(U_2)-h(W_{\As}|Y)+ \log\gcd({\As}) \Bigr\}, \label{eq:Rcf-gaussian} 
\end{IEEEeqnarray}
where $U_k\sim \mathcal{N}(0, P_k)$, the symbol mappings are $X_k = U_k$, and $\gcd{(\As)}$ is the greatest common divisor of $|a_1|$ and $|a_2|$. Specifically, the inequalities in~\eqref{eq:Rcf-gaussian} 
\begin{subequations}\label{eq:Gaussian-CF} evaluate to
\begin{align}
    R_1 &< \frac{1}{2} \log\left(\frac{P_1}{\As\left(\Sigma^{-1}+\hb^{\sf T}\hb\right)^{-1}\As^{\sf T}}\right) + \log\gcd{(\As)},\label{eq:Gaussian-CF1}\\
    R_2 &< \frac{1}{2} \log\left(\frac{P_2}{\As\left(\Sigma^{-1}+\hb^{\sf T}\hb\right)^{-1}\As^{\sf T}}\right)+ \log\gcd{(\As)},\label{eq:Gaussian-CF2}
\end{align}    
\end{subequations}
where $\Sigma=\diag(P_1, P_2)$. The rate region given by the inequalities in~\eqref{eq:Gaussian-CF} is the compute--forward rate region for asymmetric powers from~\cite{Nazer--Cadambe--Ntranos--Caire2016}. Thus, the joint typicality approach can recover the best-known achievable rate region based on nested lattice codes.

Let $\mathscr{R}^*_{\CF}$ denote the rate region from~\eqref{eq:Rcf} evaluated with respect to a choice of $\ab \in \Fq^{1\times 2}$ that minimizes $H(W_{\ab}|Y)$. Let
\begin{IEEEeqnarray}{rCl}
    \mathscr{R}_{\MAC}=\Bigl\{(R_1, R_2) \colon R_1 &<& I(X_1; Y|X_2),\nn\\
    R_2 &<& I(X_2; Y|X_1),\nn\\
    R_1+R_2 &<& I(X_1, X_2; Y)\Bigr\}   \label{eq:RMAC}
\end{IEEEeqnarray} denote the two-user multiple-access achievable rate region (for a fixed input distribution and without time sharing). As shown in~\cite[App.~E]{Lim--Feng--Pastore--Nazer--Gastpar2018}, $\mathscr{R}_{\MAC} \subseteq \mathscr{R}_{\CF}^* \cup \mathscr{R}_{\LMAC}$ since $\mathscr{R}_{\CF}^*$ fills in the defect in $\mathscr{R}_{\LMAC}$. Since the relation holds for both regions without time sharing, the inclusion relation obviously extends to the time-sharing case.
An illustration of the rate regions is given in Fig.~\ref{fig: improvement}.

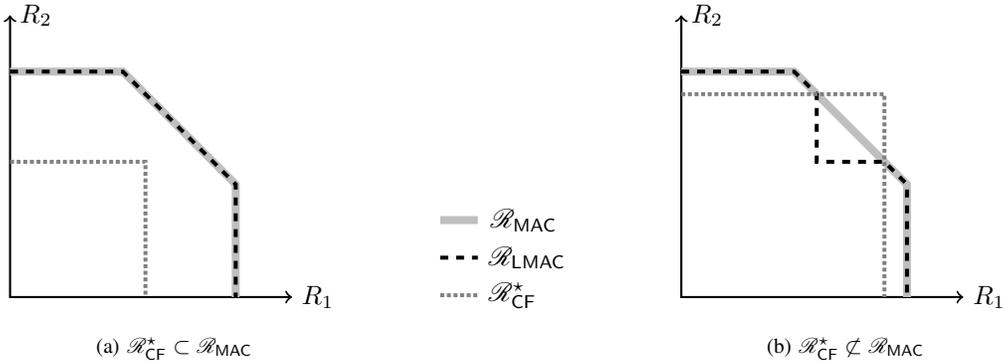
\begin{figure}[ht]
\centering
\subfloat[$\mathscr{R}^\star_{\CF} \subset \mathscr{R}_{\MAC}$]{
\begin{tikzpicture}[scale=1.5,every path/.append style={thick}]
\draw[<->] (0,2.5) node[right] {$R_2$} |- (2.5,0) node[right] {$R_1$};
\draw[line width=3.0pt,color=gray!50!white] (0,2) -- (1,2) -- (2,1) -- (2,0);
\draw[line width=1.5pt,dashed] (0,2) -- (1,2) -- (2,1) -- (2,0);
\draw[line width=1.5pt,densely dotted,color=gray] (0,1.2) -| (1.2,0);
\end{tikzpicture}
\label{fig:R_LMAC_1}
}
\hspace{1cm}
\begin{tikzpicture}[every path/.append style={thick}]
\draw[line width=3.0pt,color=gray!50!white] (0,0) --+ (5mm,0) node[right] {\textcolor{black}{$\mathscr{R}_{\MAC}$}};
\draw[line width=1.5pt,dashed] (0,-5mm) --+ (5mm,0) node[right] {\textcolor{black}{$\mathscr{R}_{\LMAC}$}};
\draw[line width=1.5pt,densely dotted,color=gray] (0,-1cm) --+ (5mm,0) node[right] {\textcolor{black}{$\mathscr{R}^\star_{\CF}$}};
\end{tikzpicture}
\hspace{1cm}
\subfloat[$\mathscr{R}^\star_{\CF} \not\subset \mathscr{R}_{\MAC}$]{
\begin{tikzpicture}[scale=1.5,every path/.append style={thick}]
\draw[<->] (0,2.5) node[right] {$R_2$} |- (2.5,0) node[right] {$R_1$};
\draw[line width=3.0pt,color=gray!50!white] (0,2) -- (1,2) -- (2,1) -- (2,0);
\draw[line width = 1.5pt,dashed] (0,2) -- (1,2) -- (1.2,1.8) |- (1.8,1.2) -- (2,1) -- (2,0);
\draw[line width = 1.5pt,densely dotted,color=gray] (0,1.8) -| (1.8,0);
\end{tikzpicture}
\label{fig:R_LMAC_2}
}
\caption{
An illustration of $\mathscr{R}^\star_{\CF}$, which is the $\mathscr{R}_{\CF}$ rate region~\eqref{eq:Rcf} evaluated with respect to a coefficient vector $\ab$ that minimizes $H(W_{\ab}|Y)$. For the two-user rate region $\mathscr{R}_{\LMAC} = \mathscr{R}_1 \cup \mathscr{R}_2$ in Corollary~\ref{cor:2-user-one-compute}, if $\mathscr{R}^\star_{\CF}$ is contained in $\mathscr{R}_{\MAC}$ as in \protect\subref{fig:R_LMAC_1}, then $\mathscr{R}_{\LMAC}$ and $\mathscr{R}_{\MAC}$ coincide. Otherwise, if $\mathscr{R}^\star_{\CF}$ protrudes out from $\mathscr{R}_{\MAC}$ as in \protect\subref{fig:R_LMAC_2}, then $\mathscr{R}_{\LMAC}$ is obtained by mirroring the protruding part along the dominant face, and removing it from $\mathscr{R}_{\MAC}$. }\label{fig: improvement}
\end{figure}

For the special case of $\As=\Ib$, the computation problem reduces to the conventional multiple-access problem, that is, we recover all $K$ messages individually. In the following corollary, we specialize Theorem~\ref{thm:LK-joint} by fixing $\Bs=\Ib$ for the multiple-access case. Note that, since our proposed coding scheme is constrained by the use of nested linear codes, our achievable rate region does not always match the multiple-access capacity region.

\begin{corollary}[Multiple access via nested linear codes] \label{cor: KK-joint}
A rate tuple $(R_1,\ldots, R_K)$ is achievable for multiple access with nested linear codes if there exists some pmf $\prod_{k=1}^Kp(u_k)$ and symbol mappings $x_k(u_k)$, $k\in\Kc$ such that, for each natural number $0 \le L_\Cs < K$ and each full-rank matrix $\Cs \in \Fq^{L_\Cs \times K}$, we can select a subset $\Sc\subseteq \Kc$ (that can depend on $\Cs$) of size $|\Sc|=K- L_\Cs$ satisfying
\begin{align}
R(\Sc) &< I(U(\Sc); Y, W_\Cs),
\end{align}
and
\begin{align}\label{eq: KK-joint-condition}
\rank\left( \left[ \begin{array}{c}
\Cs\\ \Is(\Sc)
\end{array}
\right]\right)=K.
\end{align}
\end{corollary}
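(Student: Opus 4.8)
The plan is to derive Corollary~\ref{cor: KK-joint} directly from Theorem~\ref{thm:LK-joint} by specializing $\As = \Ib$ and restricting the outer union over $\Bs$ to the single choice $\Bs = \Ib \in \Fq^{K\times K}$. This is legitimate because $\Bs = \Ib$ is full-rank with $L_\Bs = K$ and trivially satisfies $\Span(\Bs) = \Fq^K \supseteq \Span(\As)$, so the region $\mathscr{R}_{\joint}$ contains the term of the union indexed by $\Bs = \Ib$. First I would substitute $\Bs = \Ib$ into~\eqref{eq:region}: then $L_\Bs = K$, the index $\Cs$ runs over all full-rank matrices in $\Fq^{L_\Cs \times K}$ with $0 \le L_\Cs < K$, the set $\Sc \subseteq [1:K]$ has size $|\Sc| = K - L_\Cs$ subject to~\eqref{eq: KL-joint-condition1}, and $\Tc \subseteq \Kc$ has size $|\Tc| = K - L_\Cs$ subject to~\eqref{eq: KL-joint-condition2}.

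The key simplification is that with $\Bs = \Ib$ the conditions~\eqref{eq: KL-joint-condition1} and~\eqref{eq: KL-joint-condition2} become identical: $\Bs(\Sc) = \Ib(\Sc) = \Is(\Sc)$, so~\eqref{eq: KL-joint-condition2} reads $\rank\bigl([\Is(\Sc); \Is(\Kc\setminus\Tc)]\bigr) = K$, and since $|\Sc| + |\Kc\setminus\Tc| = (K - L_\Cs) + L_\Cs = K$ this rank condition forces $\Sc$ and $\Kc\setminus\Tc$ to be complementary, i.e.\ $\Tc = \Sc$. Hence the innermost intersection $\bigcap_\Tc$ collapses to the single term $\Tc = \Sc$, and the rate constraint becomes $\sum_{k\in\Sc} R_k < H(U(\Sc)) - H(W_\Ib \mid Y, W_{\Cs\Ib}) = H(U(\Sc)) - H(U(\Kc) \mid Y, W_\Cs)$. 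I would then rewrite this using the chain rule and the product-form pmf $\prod_k p(u_k)$: since $H(U(\Kc)) = H(U(\Sc)) + H(U(\Kc\setminus\Sc))$ and $H(W_\Cs \mid U(\Kc)) = 0$ (the $W_\Cs$ are deterministic functions of $U(\Kc)$), one checks $H(U(\Sc)) - H(U(\Kc) \mid Y, W_\Cs) = I(U(\Sc); Y, W_\Cs \mid U(\Kc\setminus\Sc)) + \bigl(H(U(\Sc)) - H(U(\Sc)\mid U(\Kc\setminus\Sc))\bigr)$; by independence the last bracket vanishes, but more directly one obtains $H(U(\Sc)) - H(U(\Kc)\mid Y, W_\Cs) = I(U(\Sc); Y, W_\Cs)$ after noting $H(U(\Kc)\mid Y, W_\Cs, U(\Sc)) = H(U(\Kc\setminus\Sc)\mid Y, W_\Cs, U(\Sc))$ and $H(U(\Kc\setminus\Sc)) = H(U(\Sc)) + H(U(\Kc\setminus\Sc)) - H(U(\Sc))$; I will present this as a short mutual-information identity rather than grinding through it. The constraint thus becomes $R(\Sc) < I(U(\Sc); Y, W_\Cs)$, matching the corollary.

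Finally, I would translate the nested union–intersection structure of $\mathscr{R}_{\joint}$ (now $\bigcap_\Cs \bigcup_\Sc$, with $\Tc$ eliminated) into the quantifier language of the corollary statement: a rate tuple lies in $\bigcap_\Cs \bigcup_\Sc \{R(\Sc) < I(U(\Sc); Y, W_\Cs)\}$ precisely when, for \emph{every} admissible $\Cs$, there \emph{exists} an admissible $\Sc$ (depending on $\Cs$) with $R(\Sc) < I(U(\Sc); Y, W_\Cs)$ and $\rank([\Cs; \Is(\Sc)]) = K$. Since $|\Sc| = K - L_\Cs$, the rank condition $\rank([\Cs; \Is(\Sc)]) = K$ is exactly~\eqref{eq: KK-joint-condition}, and it coincides with the $\Sc$-admissibility condition~\eqref{eq: KL-joint-condition1} under $\Bs = \Ib$. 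This gives the corollary verbatim. I do not anticipate a genuine obstacle here; the only mildly delicate point is the information-identity in the second paragraph showing $H(U(\Sc)) - H(W_\Ib\mid Y, W_{\Cs\Ib}) = I(U(\Sc); Y, W_\Cs)$, which relies on $W_\Ib = U(\Kc)$ determining $W_\Cs$ and on the independence of the $U_k$ — I would state this as a one-line lemma and verify it by the chain rule. Everything else is a direct specialization of Theorem~\ref{thm:LK-joint} combined with the observation that $\Bs = \Ib$ forces $\Tc = \Sc$.
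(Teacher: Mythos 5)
Your proposal follows the paper's own route exactly: the paper obtains Corollary~\ref{cor: KK-joint} as an immediate specialization of Theorem~\ref{thm:LK-joint} with $L=K$ and $\As=\Bs=\Ib$, observing as you do that the size and rank constraints on $\Tc$ force $\Tc=\Sc$. The one point to fix is the justification of the entropy identity in your second paragraph. The identity $H(U(\Sc)) - H(W_{\Ib}\mid Y, W_{\Cs}) = I(U(\Sc);Y,W_{\Cs})$ does \emph{not} follow from independence of the $U_k$ together with the fact that $U(\Kc)$ determines $W_{\Cs}$; it follows from the rank condition~\eqref{eq: KK-joint-condition}, which makes the stacked matrix $[\Cs^{\sf T},\, \Is(\Sc)^{\sf T}]^{\sf T}$ an invertible $K\times K$ matrix, so that $U(\Kc)$ is a deterministic function of $(W_{\Cs}, U(\Sc))$ and hence $H(U(\Kc)\mid Y, W_{\Cs}) = H(U(\Sc)\mid Y, W_{\Cs}) + H(U(\Kc\setminus\Sc)\mid Y, W_{\Cs}, U(\Sc)) = H(U(\Sc)\mid Y, W_{\Cs})$. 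Without that rank condition the identity is false (take $K=2$, $\Cs=[1\;\,0]$, $\Sc=\{1\}$, for which $H(U(\Kc)\mid Y,W_{\Cs})=H(U_2\mid Y,U_1)$ while $H(U(\Sc)\mid Y,W_{\Cs})=0$), and independence of the $U_k$ plays no role; your ``one-line lemma'' should therefore cite the invertibility of $[\Cs^{\sf T},\, \Is(\Sc)^{\sf T}]^{\sf T}$ rather than the product form of the input pmf. With that correction the argument is complete and coincides with the paper's.
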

\medskip

Corollary~\ref{cor: KK-joint} is immediate from Theorem~\ref{thm:LK-joint} by setting $L=K$, $\As=\Bs=\Ib$. Moreover, for each $\Sc$ we only have $\Tc=\Sc$ which satisfies~\eqref{eq: KL-joint-condition2}.

\begin{corollary}\label{cor: 2-user-joint}
A rate pair is achievable for the DM-MAC via nested linear codes if $(R_1, R_2)\in \mathscr{R}_{\LMAC}$ for some pmf $p(u_1)p(u_2)$ and symbol mappings $x_1(u_1),x_2(u_2)$, where $\mathscr{R}_{\LMAC}$ is defined in~\eqref{eq:RLMAC}.
\end{corollary}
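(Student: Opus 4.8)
The plan is to obtain Corollary~\ref{cor: 2-user-joint} as the $K=2$ specialization of Corollary~\ref{cor: KK-joint} with $\As=\Bs=\Is$. Under this choice, Corollary~\ref{cor: KK-joint} states that $(R_1,R_2)$ is achievable whenever one can assign to each full-rank $\Cs\in\Fq^{L_\Cs\times 2}$ with $0\le L_\Cs<2$ an index set $\Sc\subseteq\{1,2\}$ of size $|\Sc|=2-L_\Cs$ so that the rank condition~\eqref{eq: KK-joint-condition} holds and $R(\Sc)<I(U(\Sc);Y,W_\Cs)$. Hence it suffices to exhibit two such assignment rules whose resulting inequality systems are exactly the defining systems of $\mathscr{R}_1$ and $\mathscr{R}_2$ appearing in Corollary~\ref{cor:2-user-one-compute}; since $\mathscr{R}_{\LMAC}=\mathscr{R}_1\cup\mathscr{R}_2$ (cf.~\eqref{eq:RLMAC}) and each of its two pieces is then achievable, the claim follows.

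First I would enumerate the eligible $\Cs$. For $L_\Cs=0$ the matrix is empty, $\Sc=\{1,2\}$ is forced, \eqref{eq: KK-joint-condition} holds trivially, and the constraint is $R_1+R_2<I(U_1,U_2;Y)$. For $L_\Cs=1$ the matrix is a nonzero row $\Cs=[c_1\ c_2]$, and I would split into three cases according to its support: (i) $c_2=0$; (ii) $c_1=0$; (iii) $\Cs\in\Fqh^{1\times 2}$. A $2\times 2$ determinant evaluation of~\eqref{eq: KK-joint-condition} shows that case (i) forces $\Sc=\{2\}$, case (ii) forces $\Sc=\{1\}$, and in case (iii) both $\Sc=\{1\}$ and $\Sc=\{2\}$ are admissible.

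Next I would reduce the mutual-information terms to MAC form. Using $U_1\perp U_2$, the bijectivity of multiplication by a nonzero field element, and the identities $H(Y\mid U_1,U_2)=H(Y\mid X_1,X_2)$ and $H(Y\mid U_k)=H(Y\mid X_k)$ --- which hold because $Y$ depends on $(U_1,U_2)$ only through $(X_1,X_2)=(x_1(U_1),x_2(U_2))$ and the induced input marginals are consistent --- one gets $I(U_1,U_2;Y)=I(X_1,X_2;Y)$, and, in case (i), $I(U_2;Y,W_\Cs)=I(U_2;Y,U_1)=I(X_2;Y\mid X_1)$, and symmetrically $I(U_1;Y,W_\Cs)=I(X_1;Y\mid X_2)$ in case (ii). In case (iii) the term $I(U_k;Y,W_\Cs)$ is kept unsimplified.

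Finally I would assemble the two rules. Choosing $\Sc=\{1\}$ throughout case (iii), together with the forced choices elsewhere, yields precisely the constraints $R_1<I(X_1;Y\mid X_2)$, $R_2<I(X_2;Y\mid X_1)$, $R_1+R_2<I(X_1,X_2;Y)$, and $R_1<I(U_1;Y,W_\Cs)$ for every $\Cs\in\Fqh^{1\times 2}$, the conjunction of the last family being $R_1<\min_{\Cs\in\Fqh^{1\times 2}}I(U_1;Y,W_\Cs)$; this is exactly $\mathscr{R}_1$. The mirror choice $\Sc=\{2\}$ throughout case (iii) gives $\mathscr{R}_2$, so $\mathscr{R}_{\LMAC}$ is achievable. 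I do not anticipate a genuine obstacle; the one point requiring care is the third step, namely confirming that each $I(U_k;Y,W_\Cs)$ with $W_\Cs$ supported on the complementary user collapses to the corresponding standard MAC conditional mutual information, which is a routine consequence of the independence of $U_1,U_2$ and the deterministic symbol maps.
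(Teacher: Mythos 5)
Your proof is correct and takes essentially the same route as the paper's: the paper obtains this corollary by citing the $\Bs=\Is$ branch of the Appendix~\ref{app:proof_cor1} evaluation of Theorem~\ref{thm:LK-joint}, which is exactly the enumeration over $\Cs$ and $\Sc$ (with the same determinant check of the rank condition and the same reductions $I(U_1,U_2;Y)=I(X_1,X_2;Y)$ and $I(U_k;Y,U_{3-k})=I(X_k;Y\mid X_{3-k})$) that you carry out via Corollary~\ref{cor: KK-joint}, itself just Theorem~\ref{thm:LK-joint} with $\As=\Bs=\Is$. Your care in fixing a single $\Sc$-selection rule across all $\Cs$ with both entries nonzero before taking the union of the two resulting regions is precisely what the form $\mathscr{R}_{\LMAC}=\mathscr{R}_1\cup\mathscr{R}_2$ in~\eqref{eq:RLMAC} requires.
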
 This follows directly from the $\mathscr{R}_{\LMAC}$ evaluation from the proof of Corollary~\ref{cor:2-user-one-compute} in Appendix~\ref{app:proof_cor1}.

\begin{remark}\label{rmk: improvement}
Let ${\mathscr{R}}_{\LMAC,\mathrm{old}}$ denote the rate region in~\cite[Theorem~5]{Lim--Feng--Pastore--Nazer--Gastpar2018} and recall the region $\mathscr{R}_{\MAC}$ in~\eqref{eq:RMAC}. For a fixed distribution $p(u_1)p(u_2)$ and symbol mappings $x_1(u_1)$, $x_2(u_2)$, if $\mathscr{R}_{\CF}$ is strictly contained in $\mathscr{R}_{\MAC}$, then $\mathscr{R}_{\LMAC,\mathrm{old}}$ is strictly contained in $\mathscr{R}_{\MAC}$ whereas $\mathscr{R}_{\LMAC}$ is equal to $\mathscr{R}_{\MAC}$. Thus,
Corollary~\ref{cor: 2-user-joint} strictly improves upon our previous results~\cite[Theorem~5]{Lim--Feng--Pastore--Nazer--Gastpar2018} for the two-user case, i.e., $\mathscr{R}_{\LMAC,\mathrm{old}}$ is contained in $\mathscr{R}_{\LMAC}$. 
\end{remark}

In general, simultaneous decoding offers better performance than sequential decoding. However, for some applications a sequential decoder may offer a better compromise by lowering the implementation complexity, perhaps at the expense of rate (cf.~successive cancellation decoding vs.~joint decoding for multiple access).
For notational convenience, let $\As_k$ denote $\As(\{k\})$ and $\As^{k}=\As(\{1,\ldots, k\})$.
Extending the basic idea of successive cancellation to the computation problem, a decoder could first recover the linear combination corresponding to the first row of $\As$, i.e., $W^n_{\As_1}$, then use the channel output and the linear combination pair $(W^n_{\As_1}, Y^n)$ to recover a second linear combination corresponding to the second row $\As_2$ and so on (see Figure~\ref{fig:sequential-decoding}). Based on this sequential decoding strategy, the following theorem establishes a sequential decoding rate region for computing multiple linear combinations.

\begin{figure}[ht!]
\centering
\small
\psfrag{x1}[c]{$Y^n$}
\psfrag{vd}[c]{$\vdots$}
\psfrag{d1}[c]{Decoder 1}
\psfrag{d2}[c]{Decoder 2}
\psfrag{d3}[c]{Decoder $L$}
\psfrag{d4}[c]{Decoder}
\psfrag{w1}[c]{$\qquad\Wh^n_{\As_1}$}
\psfrag{w2}[c]{$\qquad(\Wh^n_{\As_1}, \Wh^n_{\As_2})$}
\psfrag{w3}[c]{$\qquad(\Wh^n_{\As_1},\ldots, \Wh^n_{\As_L})$}
\includegraphics[scale=0.75]{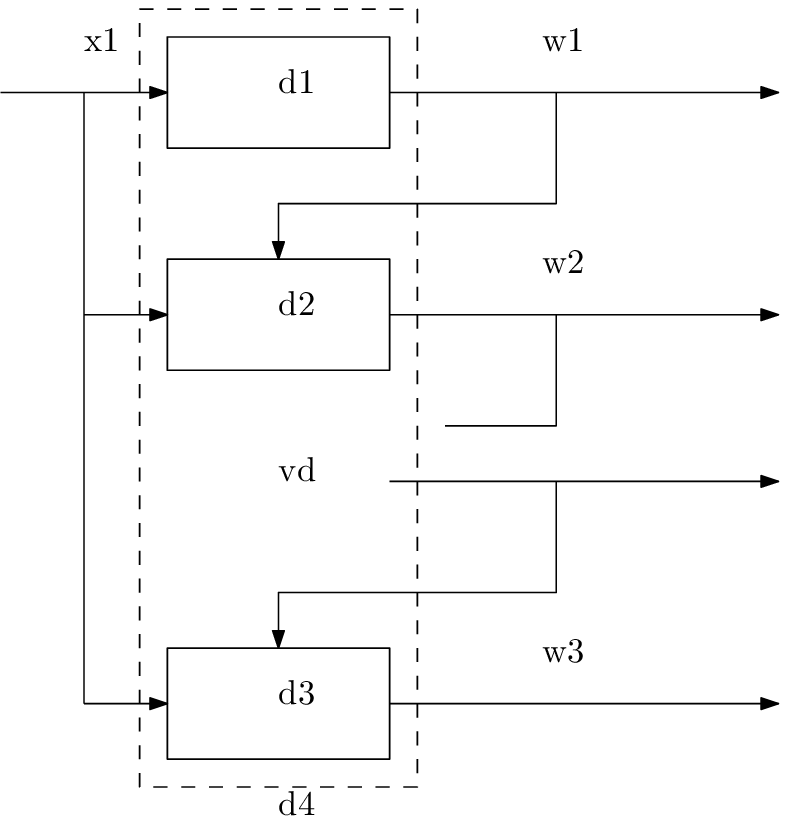}
\caption{Sequential decoder for recovering multiple linear combinations.}
\label{fig:sequential-decoding}
\end{figure}

\begin{theorem}[Sequential decoding]\label{thm:succ-decoding}
A rate tuple $(R_1,\ldots, R_K)$ is achievable for  computing the linear combinations with coefficient matrix $\As \in \Fq^{L \times K}$ if, for some pmf $\prod_{k=1}^Kp(u_k)$, symbol mappings $x_k(u_k)$, $k\in\Kc$ and full-rank matrix $\Bs\in\Fq^{L_\Bs\times K}$, $L \leq L_{\Bs} \leq K$ satisfying $\Span(\As)\subseteq \Span(\Bs)$, we have that
\begin{align}
R_k &< H(U_k)-H(W_{\Bs_j}|Y, W_{\Bs^{j-1}}),\label{eq:SD}
\end{align} for all $1\le j\le L_\Bs$ and $k\in \Kc(\Bs_j)$
where $\Bs_j$ is the $j$-th row of $\Bs$ and $\Bs^j=\Bs([1:j])$, and $\Kc(\Bs_j)=\{k\in\Kc: \Bs_{jk}\neq 0\}$.
\end{theorem}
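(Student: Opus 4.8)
\textbf{Proof proposal for Theorem~\ref{thm:succ-decoding}.}
The plan is to reduce the $L_\Bs$-stage sequential decoder to $L_\Bs$ applications of a single-linear-combination decoding step, carried out on top of the nested-linear-code machinery of~\cite{Lim--Feng--Pastore--Nazer--Gastpar2018}. First I would fix the full-rank matrix $\Bs$ with $\Span(\As)\subseteq\Span(\Bs)$ and write $\As=\Ds\Bs$ for some $\Ds\in\Fq^{L\times L_\Bs}$; this guarantees that once all $L_\Bs$ combinations $W^n_{\Bs_1},\ldots,W^n_{\Bs_{L_\Bs}}$ have been recovered, the decoder can output $(\Wh^n_{\As_1},\ldots,\Wh^n_{\As_L})$ as the $\Fq$-linear images of $(\Wh^n_{\Bs_1},\ldots,\Wh^n_{\Bs_{L_\Bs}})$ prescribed by $\Ds$, so it suffices to prove that each $W^n_{\Bs_j}$ is recovered with vanishing error. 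For the construction I would use a random nested linear codebook shared by all $K$ users, together with multicoding (joint typicality encoding) so that each $u^n_k(m_k)$ is $\e$-typical with respect to $p(u_k)$ and the induced empirical joint statistics of $(U^n_1,\ldots,U^n_K,Y^n)$ and of every linear combination $W^n_{\Bs^j}$ match $\prod_k p(u_k)\,p(y\mid x_1,\ldots,x_K)$; this is exactly the regime in which the packing, covering, and Markov lemmas for nested linear codes from~\cite{Lim--Feng--Pastore--Nazer--Gastpar2018} apply.

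The sequential decoder operates in stages $j=1,\ldots,L_\Bs$: given $Y^n$ and the previously produced estimates $\Wh^n_{\Bs^{j-1}}$, stage $j$ declares $\Wh^n_{\Bs_j}$ to be the unique value of the form $\sum_{k}\Bs_{jk}\,u^n_k(\tilde m_k)$ for which $(\Wh^n_{\Bs_j},\Wh^n_{\Bs^{j-1}},Y^n)$ is jointly $\e$-typical. I would analyze this by conditioning on the event that stages $1,\ldots,j-1$ were correct, so that $\Wh^n_{\Bs^{j-1}}=W^n_{\Bs^{j-1}}$, and then treat $(Y^n,W^n_{\Bs^{j-1}})$ as the output of a memoryless channel driven by $(U^n_1,\ldots,U^n_K)$. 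Stage $j$ is then precisely the single-combination computation problem of Theorem~\ref{thm:LK-joint} with $L=1$ and coefficient vector $\Bs_j$, specialized to the expansion matrix $\Bs_j$ itself (a $1\times K$ matrix, which forces the inner $\Cs$ to be empty, $\Sc=\{1\}$, and $\Tc=\{k\}$ for each $k$ with $\Bs_{jk}\neq0$, i.e.\ $k\in\Kc(\Bs_j)$). Reading off the rate constraint of~\eqref{eq:region} in this special case yields exactly $R_k<H(U_k)-H(W_{\Bs_j}\mid Y,W_{\Bs^{j-1}})$ for all $k\in\Kc(\Bs_j)$, which is~\eqref{eq:SD}; a union bound over the $L_\Bs$ stages, together with the telescoping of the conditioning events, then gives $\pen\to0$.

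The main obstacle is the stage-$j$ packing argument, which is where the algebraic structure bites: the competing combinations $\sum_k \Bs_{jk}\,u^n_k(\tilde m_k)$ are linearly dependent on the transmitted codewords and on the already-recovered $W^n_{\Bs^{j-1}}$, so a naive union bound over all $\prod_{k}2^{nR_k}$ message tuples is both wasteful and unsound. The resolution is to count the \emph{distinct} values a competing combination can take under the nested structure --- this is controlled by the users $k\in\Kc(\Bs_j)$ that actually appear in row $\Bs_j$ --- and to invoke the nested-linear-code packing lemma of~\cite{Lim--Feng--Pastore--Nazer--Gastpar2018} to bound the typicality probability of each such value, which contributes the exponent $H(W_{\Bs_j}\mid Y,W_{\Bs^{j-1}})$. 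A secondary point requiring care is verifying that adjoining $W^n_{\Bs^{j-1}}$ to $Y^n$ preserves the conditional memorylessness and independence structure needed to apply these lemmas (since $W^n_{\Bs^{j-1}}$ is a symbol-wise $\Fq$-linear function of the $U^n_k$), and that the full-rank hypothesis on $\Bs$ ensures $\Bs_j\notin\Span(\Bs^{j-1})$, so that stage $j$ is genuinely informative; the Markov lemma for nested linear codes handles the complementary ``atypical true codeword'' error event at each stage.
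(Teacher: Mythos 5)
Your proposal is correct and takes essentially the same route as the paper: the paper also reduces each stage $j$ to the single-combination case of Theorem~\ref{thm:LK-joint} with $L=1$ and $\As=\Bs=\Bs_j$, with the channel output $Y$ replaced by $(Y, W_{\Bs^{j-1}})$, which yields $R_k < H(U_k)-H(W_{\Bs_j}\mid Y, W_{\Bs^{j-1}})$ for $k\in\Kc(\Bs_j)$. The extra detail you supply on the stage-$j$ packing argument is already subsumed by invoking Theorem~\ref{thm:LK-joint} as a black box, which is exactly what the paper does.
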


\begin{IEEEproof}
Consider the case of recovering a single linear combination ($L=1$) corresponding to a vector $\tilde \As \in \Fq^{1 \times K}$. For this case, we evaluate Theorem~\ref{thm:LK-joint} by fixing $\As=\Bs=\tilde \As$. The resulting region is the set of rates $(R_1,\ldots, R_K)$ such that for $k\in\Kc(\tilde \As)$,
\begin{align*}
    R_k &< H(U_k)-H(W_{\tilde \As}|Y),
\end{align*}
for some pmf $\prod_{k=1}^Kp(u_k)$, symbol mappings $x_k(u_k)$, $k\in\Kc$. 
Theorem~\ref{thm:succ-decoding} then follows from Theorem~\ref{thm:LK-joint} upon replacing $Y$ with $(Y, W_{\Bs^{j-1}})$ (i.e., by including $W_{\Bs^{j-1}}$ as an additional channel output at step $j$) and replacing $\tilde \As$ with $\Bs_j$.
\end{IEEEproof}

Let
\begin{IEEEeqnarray}{rCl}
    \mathscr{R}_\joint(\Bs)
    &=& \bigcap_{\Cs} \bigcup_{\Sc} \bigcap_{\Tc} \Bigl\{ (R_1, \dotsc, R_K) \in \mathbb{R}_+^K \colon \textstyle\sum\limits_{k\in\Tc} R_k < H(U(\Tc)) - H(W_{\Bs} | Y, W_{\Cs\Bs}) \Bigr\}   \label{Rjoint} \\
    \mathscr{R}_\seq(\Bs)
    &=& \bigcap_{\substack{(j,k) \colon \\ \Bs_{j,k} \neq 0}} \Bigl\{ (R_1, \dotsc, R_K) \in \mathbb{R}_+^K \colon R_k < H(U_k) - H(W_{\Bs_j} | Y, W_{\Bs^{j-1}}) \Bigr\}   \label{Rseq}
\end{IEEEeqnarray}
denote the partial rate regions involved in Theorems~\ref{thm:LK-joint} and~\ref{thm:succ-decoding}, respectively, prior to computing the union over all matrices $\Bs$ satisfying $\Span(\Bs) \supseteq \Span(\As)$. All set operations (unions and intersections) in~\eqref{Rjoint}--\eqref{Rseq} are to be taken over the sets specified in the statements of Theorems~\ref{thm:LK-joint} and~\ref{thm:succ-decoding}, respectively. Thus, we have that
\begin{align}
    \mathscr{R}_{\joint}=\bigcup_\Bs \mathscr{R}_{\joint}(\Bs)
\end{align} and we similarly define
\begin{align}
    \mathscr{R}_{\seq}=\bigcup_\Bs \mathscr{R}_{\seq}(\Bs).
\end{align}

\begin{theorem}\label{thm:SDinJD}
For any $\Bs$, it holds that
\begin{equation}
    \mathscr{R}_\seq(\Bs)\subseteq\mathscr{R}_\joint(\Bs).
\end{equation}
In particular, it follows that $\mathscr{R}_\seq \subseteq \mathscr{R}_\joint$. 
\end{theorem}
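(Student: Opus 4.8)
\textbf{Proof proposal for Theorem~\ref{thm:SDinJD}.}
The plan is to show that for every fixed full-rank $\Bs$ with $\Span(\Bs)\supseteq\Span(\As)$, each choice of matrix $\Cs$ in the intersection $\cap_\Cs$ defining $\mathscr{R}_\joint(\Bs)$ in~\eqref{Rjoint} can be ``served'' by the sequential region. Concretely, I would fix $\Cs\in\Fq^{L_\Cs\times L_\Bs}$ and, using the freedom in choosing $\Sc$, exhibit a single index set $\Sc$ (and then the forced family of $\Tc$'s) so that $\mathscr{R}_\seq(\Bs)$ lies inside the corresponding $\bigcup_\Sc\bigcap_\Tc\{\cdots\}$ term. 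Since this will hold for every $\Cs$, intersecting over $\Cs$ gives $\mathscr{R}_\seq(\Bs)\subseteq\mathscr{R}_\joint(\Bs)$, and taking the union over $\Bs$ yields the final claim $\mathscr{R}_\seq\subseteq\mathscr{R}_\joint$.

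The key structural idea is to pick $\Sc$ to be the set of indices of the ``pivot'' rows of $\Bs$ that are \emph{not} already spanned by $\Cs$. More precisely, since $\Cs$ has $L_\Cs$ rows, $\Span(\Cs)$ (as a subspace of the row space indexed by $[1:L_\Bs]$) meets the standard flag $\Span(\eb_1),\Span(\eb_1,\eb_2),\dots$ in a predictable way; I would take $\Sc$ to consist of those $j\in[1:L_\Bs]$ for which $\eb_j$ is \emph{not} in $\Span(\Cs(\text{first }j\text{ rows restricted}))$ — equivalently, reduce $\begin{bmatrix}\Cs\\ \Ib\end{bmatrix}$ to echelon form and let $\Sc$ index the later pivots. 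This choice guarantees $|\Sc|=L_\Bs-L_\Cs$ and $\rank\!\left(\begin{bmatrix}\Cs\\ \Ib(\Sc)\end{bmatrix}\right)=L_\Bs$, so~\eqref{eq: KL-joint-condition1} holds. The point of this particular $\Sc$ is that it makes the conditioning telescope: for $j\in\Sc$, the combination $W_{\Bs_j}$ is, conditioned on $W_{\Cs\Bs}$ and on the $W_{\Bs_i}$ with $i<j$, ``the same'' as conditioning on $W_{\Bs^{j-1}}$ alone in the appropriate sense, because the rows $\Bs_i$ with $i<j$ together with $\Cs\Bs$ span exactly the rows appearing in the sequential constraint~\eqref{eq:SD}.

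The heart of the argument is then a chain-rule / entropy comparison. On the joint side, the relevant right-hand side for a given $\Tc$ is $H(U(\Tc)) - H(W_\Bs \mid Y, W_{\Cs\Bs})$, a \emph{single} bound on $\sum_{k\in\Tc}R_k$; on the sequential side one has, for each $j$ and each $k\in\Kc(\Bs_j)$, the per-user bound $R_k < H(U_k) - H(W_{\Bs_j}\mid Y, W_{\Bs^{j-1}})$. I would sum a suitable subset of the sequential constraints (one row $j$ per element of $\Sc$, picking for that row a user $k(j)\in\Kc(\Bs_j)\cap\Tc$) and use $H(U(\Tc))=\sum_{k\in\Tc}H(U_k)$ (independence of the $U_k$) together with the chain rule $H(W_\Bs\mid Y, W_{\Cs\Bs}) = \sum_{j\in\Sc} H(W_{\Bs_j}\mid Y, W_{\Cs\Bs}, W_{\Bs_{\{i\in\Sc: i<j\}}})\le \sum_{j\in\Sc} H(W_{\Bs_j}\mid Y, W_{\Bs^{j-1}})$, where the last inequality is ``conditioning reduces entropy'' once one checks that $\sigma(W_{\Cs\Bs}, W_{\Bs_i}: i\in\Sc, i<j)$ is contained in $\sigma(W_{\Bs^{j-1}})$ — this is exactly the containment guaranteed by the echelon choice of $\Sc$. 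Putting these together, $\sum_{k\in\Tc}R_k < \sum_{k\in\Tc}H(U_k) - H(W_\Bs\mid Y,W_{\Cs\Bs})$ holds for any point of $\mathscr{R}_\seq(\Bs)$, provided one can always choose the user indices $k(j)$ so that the chosen rows cover all of $\Tc$; this is a Hall-type matching fact that follows from the rank condition~\eqref{eq: KL-joint-condition2}, which says $\Bs(\Sc)$ together with $\Ib(\Kc\setminus\Tc)$ has full rank $K$, forcing the $|\Sc|$ rows $\{\Bs_j: j\in\Sc\}$ to ``reach'' every coordinate of $\Tc$.

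I expect the main obstacle to be precisely this last combinatorial/matching step: verifying that the sequential per-user constraints can be selected and summed to dominate the single joint sum-rate constraint for \emph{every} valid $\Tc$ simultaneously, i.e., that the rank condition~\eqref{eq: KL-joint-condition2} translates cleanly into the existence of the required assignment $j\mapsto k(j)$ with $k(j)\in\Kc(\Bs_j)\cap\Tc$ and $\{k(j):j\in\Sc\}=\Tc$. A secondary technical point — straightforward but needing care — is justifying the $\sigma$-algebra containment that licenses the chain-rule bound on $H(W_\Bs\mid Y,W_{\Cs\Bs})$; this is where the specific echelon-form construction of $\Sc$ from $\Cs$ is used, and one must make sure the construction is well-defined for all full-rank $\Cs$ (including the empty matrix, in which case $\Sc=[1:L_\Bs]$ and the argument reduces to the plain chain rule over all rows of $\Bs$, recovering~\eqref{eq:SD} verbatim).
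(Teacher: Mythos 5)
Your proposal is correct and follows essentially the same route as the paper's Appendix B: your echelon-form choice of $\Sc$ is precisely the paper's Algorithm~1, your telescoping chain-rule bound is the same computation, and the Hall-type matching you flag as the main obstacle is exactly the lemma the paper proves by noting that the invertible square submatrix $\hat\Bs(\Sc^\star,\Tc)$ has a nonzero determinant, so its Leibniz expansion forces a permutation with all-nonzero entries, yielding the bijection $j\mapsto k(j)$ with $\Bs_{j,k(j)}\neq 0$. One small correction: the $\sigma$-algebra containment licensing the conditioning step runs the other way from what you wrote --- the construction of $\Sc$ guarantees $\sigma(W_{\Bs^{j-1}})\subseteq\sigma\bigl(W_{\Cs\Bs},\,W_{\Bs_i}:i\in\Sc,\ i<j\bigr)$, which is what makes $H(W_{\Bs_j}\mid Y,W_{\Cs\Bs},W_{\Bs(\Sc\cap[1:j-1])})\le H(W_{\Bs_j}\mid Y,W_{\Bs^{j-1}})$ hold.
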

The proof of Theorem~\ref{thm:SDinJD} is given in Appendix~\ref{app:proof_cor2}.

\begin{example}\label{ex:dm-mac}
Consider a $K=3$ user DM-MAC with
\begin{align}\label{eq:dmc_example}
Y=\left[\sum_{k=1}^3X_k + Z\right]\bmod{4},
\end{align}
where $\Xc_k=\{0,1\}$, $\Yc=\Zc=\{0,1,2,3\}$, and $Z$ is an additive random noise generated with pmf $p_Z(0)=1-p$ and $p_Z(1)=p_Z(2)=p_Z(3)=p/3$.

\begin{figure}[ht!]
\begin{center}
\footnotesize
\psfrag{n1}[c]{$0$}
\psfrag{n2}[c]{$1$}
\psfrag{n3}[c]{$2$}
\psfrag{n4}[c]{$3$}
\psfrag{y1}[c]{$Y$}
\psfrag{v1}[c]{}
\psfrag{pm}[c]{$1-p$}
\psfrag{p1}[c]{$p/3$}
\psfrag{c1}[c]{}
\psfrag{c2}[c]{}
\psfrag{x1}[c]{$X_1$}
\psfrag{x2}[c]{$X_2$}
\psfrag{x3}[c]{$X_3$}
\includegraphics[scale=0.75]{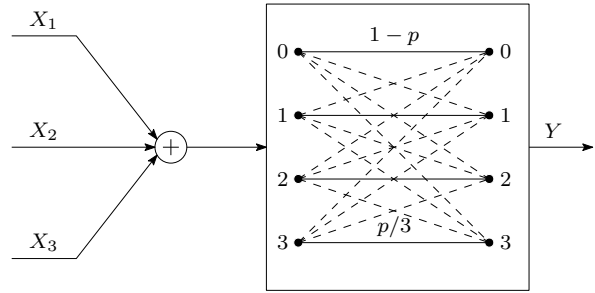}
\caption{Discrete memoryless MAC from Example~\ref{ex:dm-mac}.}
\end{center}
\end{figure}

Figure~\ref{fig:RR_Exp} depicts an inner bound on the joint decoding rate region $\mathscr{R}_\joint $ for the channel from Example~\ref{ex:dm-mac}. This bound is obtained by taking the union over one rank-1 matrix $\Bs=[1,\,1,\,1]$, one rank-3 matrix $\Bs=\Is$, and three rank-2 matrices,
\begin{equation}
\Bs\in\left\{    \begin{bmatrix}
        1 & 0 & 0 \\
        0 & 1 & 1
    \end{bmatrix},
        \begin{bmatrix}
        0 & 1 & 0 \\
        1 & 0 & 1
    \end{bmatrix},
    \begin{bmatrix}
        0 & 0 & 1 \\
        1 & 1 & 0
    \end{bmatrix}\right\}.
\end{equation}
Since this union does not exhaust all possibilities for the values of $\Bs$, it might fall short of yielding the full rate region specified by Theorem~\ref{thm:LK-joint}.

The sequential decoding points in Theorem~\ref{thm:succ-decoding} with $\Bs=[1,\,1,\,1]$ and 
\begin{equation}   
    \Bs = 
    \begin{bmatrix}
        1 & 0 & 0 \\     
        0 & 1 & 1 
    \end{bmatrix},
\end{equation} 
are marked as $a$ and $b$, respectively. The region connecting the corner points $c$ is the multiple-access capacity (recovering the messages separately) of the channel. 

\begin{figure}
\begin{center}
\footnotesize
\psfrag{a}[c]{$R_1$}
\psfrag{b}[c]{$R_2$}
\psfrag{c}[cb]{$R_3$}
\psfrag{s1}[cb]{$a$}
\psfrag{s2}[cb]{$b$}
\psfrag{s3}[cb]{$c$}
\includegraphics[scale=0.3]{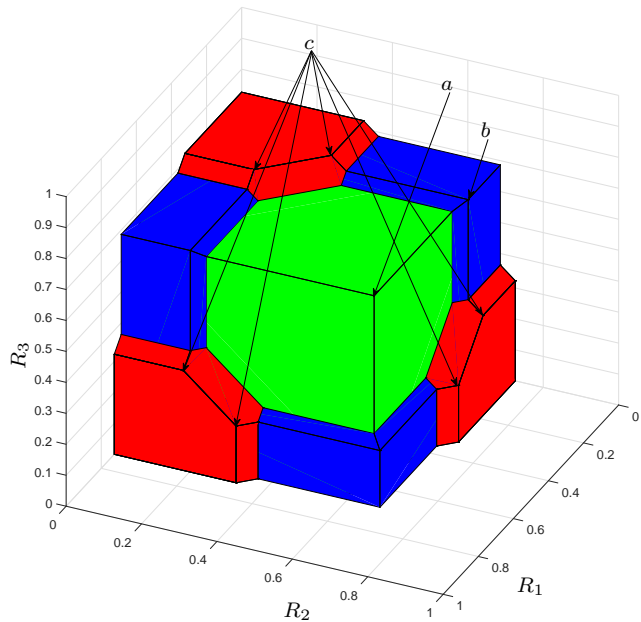}
\end{center}
\caption{An inner bound on the joint decoding rate region for computing $\As=[1,\,1,\,1]$ for the DM-MAC from Example~\ref{ex:dm-mac}.}\label{fig:RR_Exp}
\end{figure}
\end{example}

%
%

\section{Proof of Theorem~\ref{thm:LK-joint}} \label{sec:proof}

We begin by specifying the nested linear codes that will be used as our encoding functions in this paper, starting with some definitions.
For compatibility with linear codes, we define the $\q$-ary expansion of the messages $m_k\in[2^{nR_k}]$ by $\mv_k\in\Fq^{\kappa_k}$, where $\kappa_k=nR_k/\log(\q)$. In addition to the messages, we use auxiliary indices $l_k\in[2^{n\Rh_k}]$, $k=1,\ldots, K$, and similarly define their $\q$-ary expansion by $\lv_k\in\Fq^{\hat\kappa_k}$, where $\hat\kappa_k=n\Rh_k/\log(\q)$. We define
$\Rt_k:=R_k+\Rh_k$, $\Rmax:=\max\{R_1, R_2,\ldots, R_K\}$ and $\Rtmax:=\max\{\Rt_1, \Rt_2, \ldots, \Rt_K\}$.

For notational convenience, we assume that $nR_k/\log(\q)$ and $n\Rh_k/\log(\q)$ are integers for all rates in the sequel.
Further define
\begin{align*}
\mub_k(m_k, l_k) =[\mv_k, \lv_k, \mathbf{0}], \quad k\in\Kc,
\end{align*}
where $\mub_k(m_k, l_k)\in\Fq^\kappa$, $\kappa=n\Rtmax/\log(\q)$, and $\mathbf{0}$ is a vector of zeros with length $n(\Rtmax-\Rt_k)/\log(\q)$.
Note that all $\mub_k(m_k, l_k)$ have the same length due to zero padding. When it is clear from the context, we will simply write $\mub_k$ in place of $\mub_k(m_k, l_k)$. Moreover, since the set $[2^{nR_k}]$ has a one-to-one correspondence to $\Fq^{\kappa_k}$, with some abuse of notation and for simplicity, we will often denote $\mub_k$ as a member of the set $[2^{n\Rt_k}]$, i.e., $\mub_k\in[2^{n\tilde R_k}]$.
\smallskip

We define a $(2^{nR_1},\ldots, 2^{nR_K},2^{n\Rh_1},\ldots, 2^{n\Rh_K},\Fq, n)$ {\em nested linear code} as the collection of $K$ codebooks generated by the following procedure.
\smallskip

\noindent{\bf Codebook generation.}
Fix a finite field $\Fq$ and a parameter $\epsilon' \in (0,1)$.
Randomly generate a $\kappa\times n$ matrix, $\Gb\in\Fq^{\kappa\times n}$, and sequences ${\sf d}^n_k\in\Fq^n$, $k=1,\ldots, K$ where each element of $\Gb$ and ${\sf d}_k^n$ are  randomly and independently generated according to $\U(\Fq)$.

For each $k\in\Kc$, generate a linear code $\Cc_k$ with parameters $(R_k, \Rh_k, n, \q)$ by
\begin{align}
u_k^n(m_k,l_k) &=u_k^n(\mub_k(m_k,l_k)) = \mub_k(m_k, l_k) \Gb \oplus {\sf d}_k^n,\label{eq:codeword}
\end{align}
for $m_k\in[2^{nR_k}]$, $l_k\in[2^{n\Rh_k}]$. Since we have a one-to-one correspondence between $(m_k, l_k)$ and $\mub_k$, we will frequently use $u_k^n(\mub_k)$ to denote $u_k^n(m_k, l_k)$.

As an alternative representation, we write the codebook construction in~\eqref{eq:codeword} by
\begin{align}
\left[\begin{array}{c} u^n_1(\mub_1) \\ \vdots \\ u^n_K(\mub_K)\end{array}\right]= \Mub \Gb \oplus {\sf D},
\end{align}
where 
\begin{align}
\Mub=\left[\begin{array}{c}\mub_1 \\ \vdots \\\mub_K\end{array}\right] \text{ and } {\sf D}=\left[\begin{array}{c}{\sf d}^n_1 \\ \vdots \\{\sf d}^n_K\end{array}\right].
\end{align}
Throughout the proof, we will be interested in the linear dependency between $\mub_1,\ldots, \mub_K$, and representation of messages in matrix form $\Mub$ will be useful. 
Note that from this construction, each codeword is i.i.d.\ uniformly distributed (i.e., $\prod_{i=1}^n p_\q(u_{ki})$ where $p_\q=\U(\Fq)$), and the codewords are pairwise independent.

\medskip
\noindent{\bf Encoding.}
Fix an arbitrary pmf $\prod_{k=1}^Kp(u_k)$, and functions $x_k(u_k)$, $k\in\Kc$.
For $k\in\Kc$, given $m_k\in[2^{nR_k}]$, find an index $l_k\in[2^{n\Rh_k}]$ such that $ u_k^n(m_k,l_k)\in\aepvar(U_k)$. If there is more than one, select one randomly and uniformly. If there is none, randomly choose an index from $[2^{n\Rh_k}]$. Node $k$ transmits $x_{ki}(u_{ki})$, $i=1,\ldots, n$.

Define the collection of $\Mub$ matrices by the set $\Ic$ and the sumset of $\Mub\in\Ic$ with respect to the coefficient matrix $\As$ by
\begin{align*}
\Ic_\text{sumset}(\As) = \{\Mub_\As: \Mub_\As=\As\Mub, \Mub\in\Ic\}.
\end{align*}

\noindent{\bf Decoding.}
Let $\e'<\e$. Upon receiving $y^n$, the decoder searches for a unique index tuple $\tilde \Mub_\As\in\Ic_\text{sumset}(\As)$ such that
$\tilde\Mub_\As=\As\tilde\Mub$ and
\begin{align}
	(u_1^n(\tilde\mub_1), \ldots, u_K^n(\tilde\mub_K), y^n)\in\aep,
\end{align} 
for some $\tilde\Mub\in\Ic$. If it finds a unique index tuple, it declares 
\begin{align}
     \Wh^n_{\As}=\tilde\Mub_{\As}\mathsf{G} \oplus \As \Ds
\end{align}
as its estimate.
Otherwise, if there is no such index tuple, or more than one, the decoder declares an error.

\medskip

\noindent{\bf Analysis of the probability of error.}
Let $M_1, \ldots, M_K$ be the messages, and $L_1, \ldots, L_K$ be the indices chosen by the encoders.
With some abuse of notation, denote by the random variable $\Mub^{\star}_\As$ the true sum of the indices $\mub_k(M_k, L_k)$, $k\in\Kc$ with respect to the coefficients $\As$.
Then, the decoder makes an error only if one or more of the following events occur,
\begin{align*}
\Ec_1&=\{U_k^n(m_k, l_k) \not\in\aepvar \text{ for all } l_k \text{ for some } m_k, k\in\Kc\},\\
\Ec_2&=\{(U_1^n(M_1, L_1), \ldots, U_K^n(M_K, L_K), Y^n)\not\in\aep\},\\
\Ec_3&=\{(U_1^n(m_1, l_1), \ldots, U_K^n(m_K, l_K), Y^n) \in \aep \text{ for some } \Mub\in\Ic \text{ such that } \As\Mub\neq \Mub^{\star}_\As\}.
\end{align*}
Then, by the union of events bound,
\begin{align}
\P(\Ec) &\le \P(\Ec_1)+\P(\Ec_2\cap\Ec_1^c)+\P(\Ec_3\cap\Ec_1^c).
\end{align}
By the covering lemma in~\cite[Lemma 9]{Lim--Feng--Pastore--Nazer--Gastpar2018}, the probability $\P(\Ec_1)$ tends to zero as $n\to\infty$ if
\begin{align}\label{eq:cf-cover}
\Rh_k>  D(p_{U_k}\|p_\q)+\d(\e'),\quad k=1,\ldots,K.
\end{align}

Define $\Mc:=\{M_1=0, \ldots, M_K=0, L_1=0, \ldots, L_K=0\}$ as the event where all messages and the chosen auxiliary indices are zero which also implies that $\Mub^{\star}_{\As}=\mathbf{0}$. By the symmetry of the codebook construction and encoding steps, we have that $P(\Ec_2\cap\Ec_1^c)=P(\Ec_2\cap\Ec_1^c|\Mc)$ and $P(\Ec_3\cap\Ec_1^c)=P(\Ec_3\cap\Ec_1^c|\Mc)$.

By the Markov lemma in~\cite[Lemma 12]{Lim--Feng--Pastore--Nazer--Gastpar2018}, the second term $\P(\Ec_2\cap\Ec_1^c|\Mc)$ tends to zero as $n\to\infty$ if~\eqref{eq:cf-cover} is satisfied.
For the third term,
\begin{align}
\P&(\Ec_3\cap\Ec_1^c|\Mc)\nn\\
&=\P\left\{(U^n_1(\mub_1),\ldots, U^n_K(\mub_K), Y^n)\in\aep \text{ for some } \Mub \text{ such that } \As\Mub\neq \mathbf{0}, \Ec_1^c\right| \Mc\}\nn\\
&\stackrel{(a)}{=}\P \bigl\{(U^n_1(\mub_1),\ldots, U^n_K(\mub_K), W^n_\Bs(\Mub_\Bs), Y^n)\in\aep, \Mub_\Bs=\Bs\Mub \nonumber\\
&\qquad \quad \text{ for some } \Mub \text{ such that } \As\Mub\neq \mathbf{0}, \Ec_1^c\big| \Mc\bigr\}\nn\\
&\le \P\left\{(W^n_\Bs(\Mub_\Bs), Y^n)\in\aep, \Mub_\Bs=\Bs\Mub \text{ for some } \Mub \text{ such that } \As\Mub\neq \mathbf{0}, \Ec_1^c| \Mc\right\}\nn\\
&\stackrel{(b)}{\le} \P\left\{(W^n_\Bs(\Mub_\Bs), Y^n)\in\aep, \Mub_\Bs=\Bs\Mub \text{ for some } \Mub \text{ such that } \Bs\Mub\neq \mathbf{0}, \Ec_1^c| \Mc\right\}\nn\\
&= \P\left\{(W^n_\Bs(\Mub_\Bs), Y^n)\in\aep \text{ for some } \Mub_\Bs\in\Ic_\text{sumset}(\Bs) \text{ such that } \Mub_\Bs \neq \mathbf{0}, \Ec_1^c| \Mc\right\}\nn\\
&\le \sum_{\substack{\Mub_\Bs\in\Ic_\text{sumset}(\Bs):\\ \Mub_\Bs\neq \mathbf{0}}}\P\{(W^n_\Bs(\Mub_\Bs), Y^n)\in\aep, \Ec_1^c |\Mc \} \label{eq:e31}
\end{align}
where $W^n_{\Bs}(\Mub_{\Bs})=\Mub_{\Bs}\mathsf{G} \oplus \Bs \Ds$, $\Bs\in\Fq^{L_\Bs\times K}$ is any rank $L_\Bs$ matrix ($L \le L_\Bs \le K$) such that $\Span(\As)\subseteq \Span(\Bs)$, and step $(a)$ follows since $W^n_\Bs(\Mub_\Bs)$ is a deterministic function of $(U^n_1(\mub_1),\ldots, U^n_1(\mub_K))$ and step $(b)$ follows since $\Bs\Mub=\mathbf{0}$ implies $\As\Mub=\mathbf{0}$.

At a high level, the proof steps up to this point are reminiscent of standard coding theorems based on random~i.i.d. code ensembles, e.g. the multiple-access channel coding theorem proof in~\cite{El-Gamal--Kim2011} and the hybrid coding scheme in~\cite{Minero--Lim--Kim2015}, except that we use specialized joint typicality lemmas and a Markov lemma developed specifically for nested linear code ensembles~\cite[Lemma 12]{Lim--Feng--Pastore--Nazer--Gastpar2018}. Ideally, we would like to upper bound the probability term $\P\{(W^n_\Bs(\Mub_\Bs), Y^n)\in\aep, \Ec_1^c |\Mc \}$ independently of $\Mub_\Bs$ using a joint typicality lemma and then upper bound the cardinality of the set $\Ic_\text{sumset}(\Bs)$ to conclude the proof. However, due to the common nested linear codebook, if the competing index tuples are {\em linearly dependent} on the true index tuples, then the competing codewords are statistically dependent with the true codewords. This dependency means that we cannot directly apply the standard packing lemma (e.g. \cite[Lemma 3.1]{El-Gamal--Kim2011}). In the following, we resolve this difficulty by partitioning the sum index tuples $\Mub_\Bs$ and present a joint typicality lemma that can be applied to each subset separately. 

To this end, we proceed with some definitions.
Define the set
\begin{align}
\Lc_\Bs&=\{\Mub_\Bs:  \Mub_\Bs\in\Ic_\text{sumset}(\Bs), \Mub_\Bs\neq \mathbf{0}\}.
\end{align}
We further divide the set $\Lc_\Bs$ into the cover 
\begin{align*}
\Lc_{\Bs}(r, \Cs) &= \{ \Mub_\Bs: \Mub_\Bs\in\Lc_\Bs, \rank(\Mub_\Bs)=r, \Cs\Mub_\Bs=\mathbf{0}\}\\
&= \{ \Mub_\Bs: \Mub_\Bs\in\Lc_\Bs, \nullspace(\Mub_\Bs^\Tr)=\Span(\Cs)\},
\end{align*}
for $1\le r\le L_\Bs$ and $\Cs\in\Fq^{(L_\Bs-r)\times L_\Bs}$. For the case $r=L_\Bs$, $\Cs=\emptyset$,
\begin{align*}
\Lc_{\Bs}(L_\Bs, \emptyset) &= \{ \Mub_\Bs: \Mub_\Bs\in\Lc_\Bs, \rank(\Mub_\Bs)=L_\Bs\}.
\end{align*}

Note that we have $\Lc_\Bs=\cup_r\cup_\Cs\Lc_\Bs(r, \Cs)$.
The set $\Lc_{\Bs}(r, \Cs)$ divides $\Lc_{\Bs}$ into subsets of $\Mub_{\Bs}$ that have the same nullspace, where the nullspaces are represented by a generator matrix $\Cs$ (with rank of $L_{\Bs}-r$).

We are now ready to proceed with the last probability term $P(\Ec_3\cap\Ec_1^c|\Mc)$ using the union of events bound. 
Continuing from~\eqref{eq:e31},
\begin{align*}
P(\Ec_3\cap\Ec_1^c|\Mc)&\le\sum_{\substack{\Mub_\Bs\in\Ic_\text{sumset}(\Bs):\\ \Bs\Mub\neq \mathbf{0}}}\P\{(W^n_\Bs(\Mub_\Bs), Y^n)\in\aep, \Ec_1^c |\Mc \} \\
&\le \sum_{r=1}^{L_\Bs} \sum_{\Cs}\sum_{\Mub_\Bs\in\Lc_\Bs(r,\Cs)}\P\left\{ (W^n_\Bs(\Mub_\Bs), Y^n)\in\aep , \Ec_1^c |\Mc\right\},\\
& \stackrel{(a)}{\le}  \sum_{r=1}^{L_\Bs} \sum_{\Cs} 2^{n\max_{\Tc}(\sum_{k\in\Tc} R_k+\sum_{k\in\Tc} \Rh_k)}  2^{-n(I(W_{\Bs(\Sc)}; Y, W_{\Cs\Bs})+D(p_{W_{\Bs(\Sc)}}\|p_\q^{|\Sc|})+\tilde D-\d(\e))},
\end{align*}
where $\tilde D=\sum_{k\in\Kc} \big(D(p_{U_k}\| p_\q)-\Rh_k\big)$ and $(a)$ holds for any $\Sc\subseteq\Kc$ such that $|\Sc|=r$ and
\begin{equation}
    \rank\left(\begin{bmatrix} \Cs \\ \Is(\Sc) \end{bmatrix}\right) = L_\Bs, \label{eq:Scond}
\end{equation}
the maximum in the exponent of the last inequality is over all subsets $\Tc \in \Kc$ such that $|\Tc|=r$ and 
\begin{equation}
    \rank\left(\begin{bmatrix} \Bs(\Sc) \\ \Is(\Kc\setminus\Tc) \end{bmatrix}\right) = K. \label{eq:Tcond}
\end{equation}
In step $(a)$, we have applied the following two key lemmas that provide a cardinality bound on the set $\Lc_{\Bs}(r, \Cs)$ and a joint typicality lemma for nested linear codes.
The proofs of these lemmas are deferred to Appendix~\ref{app:proof_of_lemmas}.

\begin{lemma}[Cardinality bound]\label{lem:card1}
Let $L_\Bs$ and $r$ be integers such that $L_\Bs\le K$ and $1\le r\le L_\Bs$.
Let $\Bs\in\Fq^{L_\Bs\times K}$, $L_\Bs\le K$ and $\Cs\in\Fq^{(L_\Bs-r)\times L_{\Bs}}$ be full rank matrices.
Then, for any $\Sc\subseteq\Kc$ such that $|\Sc|=r$ and
\begin{align}\label{eq:S-cond}
\rank\left(\left[\begin{array}{c}\Cs\\ \Is(\Sc)\end{array}\right]\right)=L_\Bs, 
\end{align}
we have that
\begin{align}\label{eq:cardinality}
|\Lc_{\Bs}(r, \Cs)| &\le \max_{\Tc}2^{n(\sum_{k\in\Tc} R_k+\sum_{k\in\Tc} \Rh_k)}, 
\end{align}
where the maximization is over all subsets $\Tc \in \Kc$ such that $|\Tc|=r$ and
\begin{align*}
\rank\left(\left[\begin{array}{c}\Bs(\Sc)\\ \Is(\Kc\setminus\Tc)\end{array}\right]\right)=K.
\end{align*}
\end{lemma}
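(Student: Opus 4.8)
The plan is to bound the number of distinct nonzero sum-vectors $\Mub_\Bs = \Bs\Mub$ of rank exactly $r$ whose left null space is the row span of a fixed full-rank $\Cs \in \Fq^{(L_\Bs-r)\times L_\Bs}$. The key observation is that $\Cs\Mub_\Bs = \mathbf{0}$ means $\Cs\Bs\Mub = \mathbf{0}$, i.e. every row of the matrix $\Mub$ whose index corresponds to a pivot of $\Bs$ relative to $\Cs$ is linearly determined by the others. More precisely, I would first invoke the hypothesis~\eqref{eq:S-cond}: since $\rank\!\left(\left[\begin{smallmatrix}\Cs\\ \Is(\Sc)\end{smallmatrix}\right]\right)=L_\Bs$, the rows indexed by $\Sc$ together with $\Cs$ span all of $\Fq^{L_\Bs}$, so the map $\Mub_\Bs \mapsto \Mub_\Bs(\Sc)$ (restriction to the $\Sc$-rows) is injective on the set $\{\Mub_\Bs : \Cs\Mub_\Bs = \mathbf{0}\}$. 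Hence $|\Lc_\Bs(r,\Cs)|$ is at most the number of distinct realizable values of the $r\times n$ submatrix $\Mub_\Bs(\Sc) = \Is(\Sc)\Bs\Mub = \Bs(\Sc)\Mub$.

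Next I would analyze which rows of the original message matrix $\Mub$ actually influence $\Bs(\Sc)\Mub$ and with what freedom. The matrix $\Bs(\Sc) \in \Fq^{r\times K}$ has rank $r$ (it is an $r$-row submatrix of the full-rank $\Bs$, and~\eqref{eq:S-cond} forces these rows to be independent even after quotienting by $\Span(\Cs)$). The rows of $\Mub$ are the padded message/auxiliary vectors $\mub_k \in [2^{n\Rt_k}]$, so $\Mub$ ranges over a product set of size $\prod_{k\in\Kc} 2^{n\Rt_k}$. For a subset $\Tc\subseteq\Kc$ with $|\Tc|=r$ satisfying~\eqref{eq:Tcond}, i.e. $\rank\!\left(\left[\begin{smallmatrix}\Bs(\Sc)\\ \Is(\Kc\setminus\Tc)\end{smallmatrix}\right]\right)=K$, the rows of $\Mub$ indexed by $\Kc\setminus\Tc$ can be fixed (say to their true value $\mathbf{0}$ under $\Mc$, or absorbed into a coset shift) and then $\Bs(\Sc)\Mub$ is an injective linear function of the remaining $r$ rows $\{\mub_k : k\in\Tc\}$; conversely, every value of $\Bs(\Sc)\Mub$ arises from some choice of $\Mub$, hence from some choice with the $\Kc\setminus\Tc$ rows fixed. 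This gives $|\Lc_\Bs(r,\Cs)| \le \prod_{k\in\Tc} 2^{n\Rt_k} = 2^{n\sum_{k\in\Tc}(R_k+\Rh_k)}$ for each such $\Tc$, and taking the best (smallest, hence the $\max$ in the exponent reflects a worst-case that we are free to optimize over) $\Tc$ — equivalently, since we only need one valid $\Tc$ and the statement writes a $\max$, I would verify the bound holds for the particular $\Tc$ achieving the maximum and note by Steinitz/matroid exchange that at least one $\Tc$ satisfying~\eqref{eq:Tcond} exists.

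The main obstacle I anticipate is making the two injectivity claims fully rigorous over a ring-theoretic setting that is really a vector-space setting: the entries of $\Mub$ live in $\Fq$ but the rows have \emph{different effective lengths} due to zero-padding, so "fixing the $\Kc\setminus\Tc$ rows" and "varying the $\Tc$ rows freely" must respect the zero-padding structure — one must check that the linear map $(\mub_k)_{k\in\Tc}\mapsto \Bs(\Sc)\Mub$ restricted to the padded subspaces is still injective, which is exactly what condition~\eqref{eq:Tcond} encodes (it says the columns of $\Bs(\Sc)$ indexed by $\Tc$ are a basis for the column space modulo the $\Kc\setminus\Tc$ coordinates). A secondary subtlety is the degenerate/empty-matrix bookkeeping when $r=L_\Bs$ (so $\Cs=\emptyset$) and when $r<L_\Bs$; I would handle these uniformly by adopting the empty-matrix conventions stated in the preliminaries (an empty matrix has rank $0$ and imposes no constraint), so that the $r=L_\Bs$ case is just the special case $\Cs=\emptyset$ of the general argument with the injectivity of $\Mub_\Bs\mapsto\Mub_\Bs(\Sc)$ coming from $\rank(\Is(\Sc))=L_\Bs$ directly.
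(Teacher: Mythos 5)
Your first step---using the invertibility of $\bigl[\begin{smallmatrix}\Cs\\ \Is(\Sc)\end{smallmatrix}\bigr]$ together with $\Cs\Mub_\Bs=\mathbf{0}$ to reduce the count of $\Mub_\Bs$ to the count of distinct values of $\Bs(\Sc)\Mub$---is exactly the paper's first step and is fine. The gap is in the second step. You claim that for \emph{any} $\Tc$ satisfying the rank condition, every value of $\Bs(\Sc)\Mub$ already arises with the rows of $\Mub$ indexed by $\Kc\setminus\Tc$ frozen, so that $|\Lc_\Bs(r,\Cs)|\le \prod_{k\in\Tc}2^{n\Rt_k}$ for \emph{each} valid $\Tc$. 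This is false, and the culprit is precisely the zero-padding you flag as a ``secondary subtlety'': the rank condition on $\bigl[\begin{smallmatrix}\Bs(\Sc)\\ \Is(\Kc\setminus\Tc)\end{smallmatrix}\bigr]$ is a statement about column spaces over $\Fq^K$ and carries no information about the unequal effective lengths of the rows $\mub_k$. Concretely, take $K=2$, $L_\Bs=r=1$, $\Cs$ empty, $\Bs(\Sc)=[1\ \ 1]$, and $\Rt_1<\Rt_2$. Both $\Tc=\{1\}$ and $\Tc=\{2\}$ satisfy the rank condition, but the set $\{\mub_1\oplus\mub_2\}$ is the entire padded subspace of user $2$ and has size $2^{n\Rt_2}$, which strictly exceeds the bound $2^{n\Rt_1}$ your argument would give for $\Tc=\{1\}$. (Had your per-$\Tc$ claim been true, the lemma would hold with a $\min$ over $\Tc$ rather than a $\max$; the $\max$ in the statement is the signal that the bound can only be established for a suitably chosen $\Tc$.)

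What is missing is the selection of the \emph{right} $\Tc$, and this is where the paper does real work: order the users so that $\Rt_{\pi(1)}\ge\cdots\ge\Rt_{\pi(K)}$, replace $\Bs(\Sc)$ by its reduced row echelon form with respect to this ordering (an invertible row operation, so the number of distinct products is unchanged), and let $\Tc$ be the set of pivot positions. Then the $j$-th row of $\Bs(\Sc)\Mub$ equals $\mub_{t_j}$ plus a combination of rows $\mub_k$ with $\Rt_k\le\Rt_{t_j}$, so it lies in the padded subspace of the pivot user $t_j$ and takes at most $2^{n\Rt_{t_j}}$ values; multiplying over $j$ gives the bound for this particular $\Tc$, which is then dominated by the maximum over all admissible $\Tc$. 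Your proposal never identifies this greedy/pivot choice, and the assertion that condition~\eqref{eq: KL-joint-condition2} ``encodes'' compatibility with the padding structure is not correct, so the argument as written does not close.
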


\begin{lemma}[Joint typicality lemma for nested linear codes]\label{lem:jt-lemma}
Consider $1\le r \le L_{\Bs}$ and $\Cs \in\Fq^{(L_\Bs-r)\times L_\Bs}$ such that $\rank(\Cs)=L_\Bs-r$ and assume that $ \Mub_\Bs \in \Lc_\Bs(r, \Cs)$.
Then,
\begin{align*}
\P&\left\{ (W^n_\Bs(\Mub_\Bs), Y^n)\in\aep , \Ec_1^c |\Mc\right\}\le 2^{-n(I(W_{\Bs(\Sc)}; Y, W_{\Cs\Bs})+D(p_{W_{\Bs(\Sc)}}\|p_\q^{|\Sc|})+\tilde{D}-\d(\e))}.
\end{align*}
where $\tilde D=\sum_{k\in\Kc} \big(D(p_{U_k}\| p_\q)-\Rh_k\big)$.
\end{lemma}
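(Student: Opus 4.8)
The plan is to condition on the true codewords and the channel output being jointly typical (which is guaranteed under $\Ec_1^c$ and the event $\Mc$), and to bound the probability that a \emph{fixed} competing sum-codeword $W^n_\Bs(\Mub_\Bs)$ with $\Mub_\Bs\in\Lc_\Bs(r,\Cs)$ lands in the typical set. The key structural observation is that, since $\Cs\Mub_\Bs = \mathbf 0$ and $\rank(\Mub_\Bs)=r$, the rows of $\Mub_\Bs$ span exactly the space $\nullspace(\Cs^\Tr)^\perp$ — equivalently, $W^n_{\Cs\Bs}$ is a \emph{deterministic} function of the true codewords already fixed by the conditioning, while the ``new'' randomness in $W^n_\Bs(\Mub_\Bs)$ lives in the complementary $r$-dimensional coordinate subspace indexed by $\Sc$ (this is exactly why the set $\Sc$ with $\rank\bigl[\begin{smallmatrix}\Cs\\ \Is(\Sc)\end{smallmatrix}\bigr]=L_\Bs$ is introduced). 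So the first step is: rewrite $W^n_\Bs(\Mub_\Bs)$ in a basis adapted to $(\Cs,\Is(\Sc))$, separating the part that is frozen by the conditioning from the part $W^n_{\Bs(\Sc)}(\Mub_{\Bs(\Sc)})$ that remains (pairwise-)independent of everything in $\Mc$.

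Second, I would invoke the nested-linear-code joint typicality lemma from the companion paper~\cite[Lemma~10 or 11]{Lim--Feng--Pastore--Nazer--Gastpar2018} (the analogue of the classical packing/joint-typicality lemma, adapted to the uniform-over-$\Fq^n$ ensemble with multicoding) applied conditionally given $\Ec_1^c$: the probability that a codeword $W^n_{\Bs(\Sc)}$, generated independently and i.i.d.\ uniform on $\Fq^{|\Sc|n}$ via $\Mub_{\Bs(\Sc)}\mathsf G\oplus \Bs(\Sc)\Ds$, is jointly typical with $(Y^n,W^n_{\Cs\Bs})$ is bounded by
$2^{-n(I(W_{\Bs(\Sc)};Y,W_{\Cs\Bs}) + D(p_{W_{\Bs(\Sc)}}\|p_\q^{|\Sc|}) - \d(\e))}$,
where the relative-entropy term accounts for the mismatch between the uniform generation distribution and the target type $p_{W_{\Bs(\Sc)}}$. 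The remaining factor $2^{-n\tilde D}$ with $\tilde D=\sum_{k}(D(p_{U_k}\|p_\q)-\Rh_k)$ comes from the normalization: conditioning on $\Ec_1^c$ (each encoder found an index $l_k$ making $U^n_k$ typical) costs a factor that, by the covering lemma, is controlled by $2^{n\Rh_k}$ relative to the number of typical sequences $\approx 2^{n(\log\q - D(p_{U_k}\|p_\q))}$; tracking these constants across all $K$ users yields precisely $\tilde D$. I would assemble these pieces, being careful that the conditioning on $\Mc$ does not disturb the pairwise independence needed for the packing-type bound (this is where the separation in step one is used).

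The main obstacle I anticipate is the bookkeeping in step one: verifying that, after passing to the $(\Cs,\Is(\Sc))$-adapted basis, the ``new'' coordinate block $\Mub_{\Bs(\Sc)}$ is genuinely independent of the fixed data in $\Mc$ and has the right marginal (i.e., that $\Bs(\Sc)$ selects an $r$-dimensional slice whose contribution $W^n_{\Bs(\Sc)}=\Mub_{\Bs(\Sc)}\mathsf G\oplus\Bs(\Sc)\Ds$ is exactly i.i.d.\ uniform and pairwise independent of the true codewords), and that the conditional type of $W^n_{\Bs(\Sc)}$ given $(Y^n,W^n_{\Cs\Bs})$ under the typicality event matches $p_{W_{\Bs(\Sc)}}$ up to $\d(\e)$ slack. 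This is the step where the rank hypothesis $\rank(\Cs)=L_\Bs-r$ and the membership $\Mub_\Bs\in\Lc_\Bs(r,\Cs)$ are essential, and where one must be careful that the dithers $\Ds$ (which make the sum-codewords individually uniform even though $\Mub_\Bs$ may be rank-deficient as a message matrix) are correctly propagated. The rest — applying the established lemma and collecting exponents — is routine.
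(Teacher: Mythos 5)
Your proposal follows essentially the same route as the paper's proof: the paper's step $(a)$ is exactly your change of basis via the invertible matrix $\bigl[\Cs^{\sf T},\, \Is(\Sc)^{\sf T}\bigr]^{\sf T}$, which splits $W^n_\Bs(\Mub_\Bs)$ into the frozen part $W^n_{\Cs\Bs}$ (deterministic given $\Mc$ since $\Cs\Mub_\Bs=\mathbf{0}$) and the fresh part $W^n_{\Bs(\Sc)}$; the conditional independence you flag as the main bookkeeping obstacle is handled there as the Markov chain $(Y^n, W^n_{\Cs\Bs})\to U^n_\Kc(\mathbf{0})\to W^n_{\Bs(\Sc)}(\Mub_{\Bs(\Sc)})$ given $\Mc$, and the remaining exponent collection is done by invoking Lemmas~11 and~7 of the companion paper, with the $2^{n\Rh(\Kc)}$ factor from the $\Mc$-conditioning combining with the $D(p_{U_k}\|p_\q)$ terms to produce $\tilde D$, just as you describe. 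The proposal is correct and matches the paper's argument.
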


Thus, for any $\Bs\in\Fq^{L_{\Bs}\times K}$, $\Span(\As)\subseteq \Span(\Bs)$, we have a bound on $P(\Ec_3\cap\Ec_1^c|\Mc)$ that tends to zero as $n\to\infty$ if 
for all full rank $\Cs\in\Fq^{L_{\Cs}\times L_{\Bs}}$, $0\le L_{\Cs} < L_{\Bs}$, there exists an $\Sc$ that satisfies~\eqref{eq:Scond} and
\begin{align}
    \Rh_k &> D(p_{U_k}\|p_\q)+\d(\e'),\quad k\in\Kc\\
    \sum_{k\in\Tc} R_k+\sum_{k\in\Tc} \Rh_k& < I(W_{\Bs(\Sc)}; Y, W_{\Cs\Bs})+D(p_{W_{\Bs(\Sc)}}\|p_\q^{|\Sc|})+\tilde D-\d(\e),
\end{align}
for all $\Tc$ which satisfies~\eqref{eq:Tcond}. To complete the proof, we eliminate the auxiliary rates $\Rh_k$, $k\in\Kc$ and find that
\begin{align}
    \sum_{k\in\Tc} R_k&< I(W_{\Bs(\Sc)}; Y, W_{\Cs\Bs})+D(p_{W_{\Bs(\Sc)}}\|p_\q^{|\Sc|})-\sum_{k\in\Tc}D(p_{U_k}\|p_\q)-\d'(\e')\\
    &\stackrel{(a)}{=}I(W_{\Bs(\Sc)}; Y,W_{\Cs\Bs})-H(W_{\Bs(\Sc)})+\sum_{k\in\Tc}H(U_k)-\d'(\e')\\
    &=H(U(\Tc))-H(W_{\Bs(\Sc)}| Y,W_{\Cs\Bs})-\d'(\e'),
\end{align}
where step $(a)$ is from the relation $D(p_{W_{\Bs(\Sc)}}\|p_\q^{|\Sc|})=|\Sc|\log(\q)-H(W_{\Bs(\Sc)})$ as well as $D(p_{U_k}\|p_\q)=\log(\q)-H(U_k)$ and the fact that $|\Sc|=|\Tc|$.

%
%

\section{Discussions}\label{sec:discussions}

In this paper, we presented a framework for integrating structured code ensembles into the joint typicality framework. 
As a case, we generalized the compute--forward framework to discrete memoryless networks and established a joint decoding rate region for computing any number of linear combinations of the codewords. Our work provides the foreground for a general theorem on arbitrary networks and flows using nested linear codes. 

In this sense, we view the compute--forward framework, not as a fourth paradigm for relaying, but as a new dimension in code construction for relaying strategies. 
This is a more general perspective that views compute--forward as an ``algebraic'' decode--forward strategy as originally suggested by  Abbas El Gamal in his 2010 ISIT Plenary Talk~\cite{El-Gamal2010}. He also posed several interesting questions on how joint typicality coding strategies can be combined with structured codes. Our framework provides the initial tools to redevelop and explore the coding strategies in network information theory using random nested linear code ensembles in place of random i.i.d.~code ensembles.   

One by-product of our analysis is an achievable rate region for multiple access via nested linear codes. Recent work~\cite{Sen--Kim2018} has further explored this rate region and  shown, through a careful selection of both the finite field and symbol mappings, that it in fact corresponds to the full multiple-access capacity region.

Another important aspect of our framework is the resulting {\em simultaneous joint decoding} rate region for compute--forward. The joint typicality decoder presented in this paper was shown to be optimal with respect to nested linear codes in~\cite{Sen--Lim--Kim2018} for the $K=2, L=1$ case. On the other hand, the sharpest-known analysis for a lattice-based compute--forward strategy relies on suboptimal sequential decoding~\cite{Nazer--Cadambe--Ntranos--Caire2016} due to the technical limitations in analyzing joint decoders for lattice codes~\cite{Ordentlich--Erez2013}.

In an effort to build a unifying compute--forward framework that includes all previously-known achievable rate regions (traditionally obtained via lattice codes), one important question is whether the DMC framework presented herein can be translated to the continuous case and to integer-linear combinations over the {\em real} field. In~\cite[Theorems 4,7]{Lim--Feng--Pastore--Nazer--Gastpar2018}, we have already given a proof  for the special case $K=2,\ L=1$. Our discretization method, which borrows a key result from~\cite{Makkuva--Wu2018}, is more involved than most of the classic discretization approaches for information-theoretic quantities (e.g., \cite{Gray1990_A, Cover--Thomas2006, El-Gamal--Kim2011}). The proof of the general case for arbitrary $K$, $L$, requires yet more steps and will appear in an upcoming publication~\cite{Pastore--Lim--Feng--Nazer--Gastpar2018}.

%
%

\appendices

\section{Proof of Corollary~\ref{cor:2-user-one-compute}}\label{app:proof_cor1}

We particularize Theorem~\ref{thm:LK-joint} by setting $K=2$, $L=1$ and $\As=\ab=[a_1\,\, a_2]\in\Fq^{1\times 2}$ with $a_1\neq 0$ and $a_2\neq 0$. Since in the outermost union operation in~\eqref{eq:region}, $\Bs$ runs over all matrices satisfying $\Span(\Bs) \supset \Span(\As)$, we infer that $\Bs$ must run over all full-rank matrices $\Bs \in \Fq^{2 \times 2}$ as well as over all those $\Bs \in \Fq^{1 \times 2}$ that are scalar multiples of $\ab$ (for which case it suffices to consider $\Bs=\ab$). For the sake of simplifying derivations (at the cost of possibly missing out on a part of the achievable rate region), out of all possible full-rank $2$-by-$2$ matrices $\Bs$, we shall only retain the identity matrix $\Bs = \Is$. In summary, the union operation reduces to taking the union over only two matrices, namely $\Bs = \ab$ and $\Bs = \Is$.

For $\Bs = \ab$, by virtue of the constraints on $\Cs$, $\Sc$, $\Tc$ laid out in Theorem~\ref{thm:LK-joint}, in the set operations of~\eqref{eq:region} the matrix $\Cs$ can only be the $0$-by-$1$ empty matrix, $\Sc$ can only be the singleton set $\{1\}$, and $\Tc$ can be either $\{1\}$ or $\{2\}$. The resulting rate region is $\mathscr{R}_{\CF}$ as defined in~\eqref{eq:Rcf}.

For $\Bs = \Is$, $\Cs$ runs over all full-rank (empty) $0$-by-$2$ and $1$-by-$2$ matrices:
\begin{itemize}
    \item   For $\Cs \in \Fq^{0 \times 2}$, $\Sc$ and $\Tc$ can only be equal to $\Sc = \Tc = \{1,2\}$, hence we obtain the sum-rate bound
    \begin{IEEEeqnarray*}{rCl}
        R_1 + R_2
        &<& H(U_1,U_2) - H(W_{\Bs}|Y) \\
        &=& I(U_1,U_2;Y) \\
        &=& I(X_1,X_2;Y).
    \end{IEEEeqnarray*}
    \item   For $\Cs = [c_1\,\, c_2] \in \Fq^{1 \times 2}$ a non-zero vector, we need to further distinguish three cases:
    \begin{itemize}
        \item   Case $c_1 \neq 0$ and $c_2 = 0$: The index sets can only be equal to $\Sc = \Tc = \{2\}$, hence we obtain the rate bound
                \begin{IEEEeqnarray*}{rCl}
                    R_2 &<& H(U_2) - H(U_1,U_2|Y,U_1) \\
                        &=& I(U_2;Y,U_1) \\
                        &=& I(X_2;Y|X_1).   \IEEEyesnumber
                \end{IEEEeqnarray*}
        \item   Case $c_1 = 0$ and $c_2 \neq 0$: similarly to the previous case, the index sets can only be equal to $\Sc = \Tc = \{1\}$, hence we obtain the rate bound
                \begin{IEEEeqnarray}{rCl}
                    R_1 &<& I(X_1;Y|X_2).
                \end{IEEEeqnarray}
        \item   Case $c_1 \neq 0$ and $c_2 \neq 0$: the index sets can be either $\Sc = \Tc = \{1\}$ or $\Sc = \Tc = \{2\}$. For the former, we obtain 
                \begin{IEEEeqnarray*}{rCl}
                    R_1 &<& H(U_1) - H(U_1,U_2|Y,W_\Cs) \\
                        &=& H(U_1) - H(U_1|Y,W_\Cs) \\
                        &=& I(U_1;Y,W_\Cs)   \IEEEyesnumber\label{LMAC_inequality_1}
                \end{IEEEeqnarray*}
                For the latter, we obtain similarly
                \begin{IEEEeqnarray}{rCl}
                    R_2 &<& I(U_2;Y,W_\Cs).   \label{LMAC_inequality_2}
                \end{IEEEeqnarray}
    \end{itemize}
    The last two rate inequalities~\eqref{LMAC_inequality_1}--\eqref{LMAC_inequality_2} are combined via a logical `or' (due to the union over $\Sc$). Recombining the above three case distinctions on the coefficient pair $(c_1,c_2)$ via a logical `and' (due to union over $\Cs$) yields the rate region $\mathscr{R}_\LMAC$ as defined in~\eqref{eq:RLMAC}. Finally, the union over $\Bs$ yields the final rate region $\mathscr{R}_\CF \cup \mathscr{R}_\LMAC$ and proves Corollary~\ref{cor:2-user-one-compute}.
\end{itemize}

%
%

\section{Proof of Theorem~\ref{thm:SDinJD}}\label{app:proof_cor2}

Let us define $\tilde{\mathscr{R}}_\joint(\Bs)$ as the joint decoding region $\mathscr{R}_\joint(\Bs)$ where, for each $\Cs$, we fix an index set $\Sc^{\star}$ chosen according to Algorithm~\ref{alg:s-matrix}. (To streamline our notation, we do not show the dependence of $\Sc^*$ on $\Cs$ explicitly.) In other words, instead of taking the union over all $\Sc$ in Theorem~\ref{thm:LK-joint}, for each $\Cs$ we fix a set $\Sc=\Sc^\star$, which leads to the relation 
\begin{align*}
\tilde{\mathscr{R}}_\joint(\Bs) \subseteq \mathscr{R}_\joint(\Bs),
\end{align*}
since by following the steps in Algorithm~\ref{alg:s-matrix}, $\Sc^\star$ satisfies
\begin{align}
\rank( [\Cs^{\sf T}, \Is(\Sc^\star)^{\sf T}])&=L_\Bs.
\end{align}

\begin{algorithm}[ht!]
\caption{Algorithm for constructing $\Sc^\star$.}
\begin{algorithmic}[1]
\State $\Sc^\star \gets \emptyset$
\For {$i=1:L_\Bs$}
\If {$\eb_i\not\in \Span([\Is(\Sc^\star)^{\sf T}, \Cs^{\sf T}]^{\sf T})$}
\State $\Sc^\star \gets \Sc^\star \cup \{i\}$
\EndIf
\EndFor
\end{algorithmic}
\label{alg:s-matrix}
\end{algorithm}

In the following, we prove the relation
\begin{align}
    \mathscr{R}_\seq(\Bs) \subseteq \tilde{\mathscr{R}}_\joint(\Bs) \label{eq:relation1}
\end{align}
by showing that the rate region $\mathscr{R}_\seq(\Bs)$ satisfies every inequality in 
$\tilde{\mathscr{R}}_\joint(\Bs)$, namely, the set of inequalities
\begin{align*}
\sum_{k\in\Tc} R_k < H(U(\Tc))-H(W_{\Bs(\Sc^\star)}|Y, W_{\Cs\Bs}),
\end{align*}
for all full rank $\Cs \in \Fq^{L_\Cs\times L_\Bs}$, $\Sc^\star$ chosen by Algorithm~\ref{alg:s-matrix}, 
and all $\Tc\subset\Kc$ such that $|\Tc|=L_\Bs-L_\Cs$ with
\begin{align}
\rank([\Bs(\Sc^\star)^{\sf T}, \Is(\Kc\setminus\Tc)^{\sf  T}])&=K.
\end{align}

\begin{lemma}
Let $\Sc^\star$ be chosen according to Algorithm~\ref{alg:s-matrix}.
Then, for $\Tc$ such that~\eqref{eq: KL-joint-condition2} is satisfied, there exists a one-to-one mapping $\sigma_\Tc: \Sc^\star \to \Tc$ such that for all $j\in\Sc^\star$, $\Bs_{j,\sigma_\Tc(j)}\neq 0$.
\end{lemma}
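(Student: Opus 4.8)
The plan is to reduce the rank hypothesis on $\Tc$ to the invertibility of a square submatrix of $\Bs$, and then to read the desired bijection off a nonvanishing term in the Leibniz expansion of its determinant (equivalently, from Hall's marriage theorem). First I would dispatch the dimension bookkeeping: Algorithm~\ref{alg:s-matrix} appends to the row set only those indices $i$ for which $\eb_i$ is not yet in the span of the current stack $[\Ib(\Sc^\star)^{\sf T},\Cs^{\sf T}]^{\sf T}$, and since after processing $i=1,\dots,L_\Bs$ that span contains every $\eb_i$, the final stack has rank $L_\Bs$; as $\Cs$ has rank $L_\Cs$ and each appended vector strictly increases the rank, this forces $|\Sc^\star| = L_\Bs - L_\Cs$, which also equals $|\Tc|$. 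Consequently the matrix
\[
    M := \begin{bmatrix} \Bs(\Sc^\star) \\ \Ib(\Kc\setminus\Tc)\end{bmatrix}
\]
appearing in the rank hypothesis~\eqref{eq: KL-joint-condition2} (with $\Sc^\star$ in place of $\Sc$) is square of size $K\times K$, so the hypothesis $\rank(M)=K$ says precisely that $M$ is invertible.

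Next I would extract the relevant square submatrix. Permuting the columns of $M$ so that those indexed by $\Tc$ come first sends $\Ib(\Kc\setminus\Tc)$ to a matrix whose block in the columns indexed by $\Tc$ vanishes — because $\eb_k$ has no support on $\Tc$ when $k\notin\Tc$ — and whose remaining block is a permutation matrix, i.e. an identity after a consistent ordering of $\Kc\setminus\Tc$. Hence, up to a column permutation, $M$ takes the block upper triangular form $\left[\begin{smallmatrix} \Bs[\Sc^\star,\Tc] & \ast \\ \mathbf{0} & I\end{smallmatrix}\right]$, where $\Bs[\Sc^\star,\Tc] := \Ib(\Sc^\star)\,\Bs\,\Ib(\Tc)^{\sf T}$ denotes the $(L_\Bs-L_\Cs)\times(L_\Bs-L_\Cs)$ submatrix of $\Bs$ with row set $\Sc^\star$ and column set $\Tc$. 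Its determinant equals $\det(\Bs[\Sc^\star,\Tc])$ up to sign, so $M$ is invertible if and only if $\Bs[\Sc^\star,\Tc]$ is invertible.

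Finally, fixing arbitrary orderings of $\Sc^\star$ and $\Tc$ turns $\Bs[\Sc^\star,\Tc]$ into an ordinary square matrix with nonzero determinant, so the Leibniz expansion $\det\Bs[\Sc^\star,\Tc] = \sum_\pi \sgn(\pi)\prod_j(\Bs[\Sc^\star,\Tc])_{j,\pi(j)}$ must contain a nonzero summand; picking a permutation $\pi$ whose product is nonzero and transporting it back through the chosen orderings produces a bijection $\sigma_\Tc\colon\Sc^\star\to\Tc$ with $\Bs_{j,\sigma_\Tc(j)}\neq 0$ for every $j\in\Sc^\star$, as required. (Alternatively one may invoke Hall's marriage theorem on the bipartite graph on $\Sc^\star\cup\Tc$ whose edges are the nonzero entries of $\Bs[\Sc^\star,\Tc]$: any deficient set $\Jc\subseteq\Sc^\star$ would force the rows of $\Bs[\Sc^\star,\Tc]$ indexed by $\Jc$ to be supported on fewer than $|\Jc|$ columns, contradicting the linear independence of the rows of the invertible matrix $\Bs[\Sc^\star,\Tc]$.) The only genuine obstacle here is notational — carefully aligning the abstract index sets $\Sc^\star\subseteq[1:L_\Bs]$ and $\Tc\subseteq\Kc$ with matrix coordinates and verifying the block structure after the column permutation — since the underlying facts are elementary linear algebra. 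It is also worth noting that the argument uses only $|\Sc^\star|=|\Tc|$ together with~\eqref{eq: KL-joint-condition2}, so the statement holds for any index set of the right size satisfying that rank condition, and not merely for the particular $\Sc^\star$ returned by Algorithm~\ref{alg:s-matrix}.
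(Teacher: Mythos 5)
Your proposal is correct and follows essentially the same route as the paper: both reduce condition~\eqref{eq: KL-joint-condition2} to the invertibility of the square submatrix of $\Bs$ with rows $\Sc^\star$ and columns $\Tc$, and then extract the bijection from a nonvanishing term in the Leibniz expansion of its determinant (the paper phrases this last step as a proof by contradiction, which is the same argument). Your block-triangularization after the column permutation merely makes explicit the step the paper dismisses as ``easy to see,'' and your closing observation that only $|\Sc^\star|=|\Tc|$ and the rank condition are used is accurate.
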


\begin{proof}
Let
\begin{align*}
\hat\Bs=\left[\begin{array}{c}
\Bs(\Sc^\star) \\
\Is(\Kc\setminus\Tc)
\end{array}\right].
\end{align*} 
Since $\hat\Bs$ and $\Bs(\Sc^\star)$ are full rank, we have that $|\Sc^\star|=|\Tc|$. Next, we define a submatrix $\hat\Bs(\Sc^\star, \Tc)$ which is formed by taking the elements $\hat\Bs_{ij}$, $i\in\Sc^\star$ and $j\in\Tc$. Since $\hat\Bs(\Sc^\star, \Tc)$ is a submatrix of $\hat\Bs$, the existence of a permutation $\hat\sigma_{\Tc}:[1:|\Sc^\star|] \to [1:|\Tc|]$ such that $\hat\Bs_{j, \hat\sigma(j)}(\Sc^\star, \Tc)\neq 0$, $j\in[1:|\Sc^\star|]$, implies the existence of a one-to-one mapping $\sigma_\Tc: \Sc^\star\to\Tc$ such that $\hat{\Bs}_{j,\sigma_\Tc(j)}\neq 0$, and thus equivalently, $\Bs_{j,\sigma_\Tc(j)}\neq 0$ for $j\in\Sc^\star$.

To this end, we will show that there exists such a permutation for $\hat\Bs(\Sc^\star, \Tc)$ by contradiction. From the fact that $\hat\Bs$ is full rank and $\Is(\Kc\setminus\Tc)$ is a collection of standard basis vectors, it is easy to see that the submatrix $\hat\Bs(\Sc^\star, \Tc)$ is a full rank matrix. Since $\hat\Bs(\Sc^\star, \Tc)$ is also a square matrix, it is invertible. Suppose that there does not exist such a permutation for $\hat\Bs(\Sc^\star, \Tc)$. Then, for all possible permutations, $\prod_{j=[1:|\Sc^\star|]}\hat\Bs_{j,\hat\sigma_\Tc(j)}(\Sc^\star, \Tc)=0$. Since this implies that the determinant of $\hat\Bs(\Sc^\star, \Tc)$ is zero, it contradicts the fact that it is invertible.  
\end{proof}

By taking the sum over both sides of the inequalities 
\begin{align*}
R_{\sigma_{\Tc}(j)} &< H(U_{\sigma_{\Tc}(j)})-H(W_{\Bs_j}|Y, W_{\Bs^{j-1}}), \quad j\in\Sc^\star
\end{align*}
which are included in the region $\mathscr{R}_{\text{seq}}(\Bs)$, we have
\begin{align}
\sum_{k\in\Tc} R_k &\stackrel{(a)}{<} H(U(\Tc))-\sum_{j\in\Sc^\star}H(W_{\Bs_j}|Y, W_{\Bs^{j-1}})\\
&\le H(U(\Tc))-\sum_{j\in\Sc^\star}H(W_{\Bs_j}|Y, W_{\Cs\Bs}, W_{\Bs^{j-1}})\\
&\stackrel{(b)}{=} H(U(\Tc))-\sum_{j\in\Sc^\star}H(W_{\Bs_j}|Y, W_{\Cs\Bs}, W_{\Bs(\Sc^\star \cap[1:j-1])})\\
&= H(U(\Tc))-H(W_{\Bs(\Sc^\star)}|Y, W_{\Cs\Bs}),
\end{align}
where step $(a)$ follows since there exists a one-to-one mapping $\sigma: \Sc^\star \to \Tc$ such that for all $j \in \Sc, \Bs_{j, \sigma(j)} \ne 0$ and step $(b)$ follows 
from the fact that
\begin{align*}
\Span\left(\left[\begin{array}{c}\Cs\Bs \\ \Bs^{j}\end{array}\right]\right) = \Span\left(\left[\begin{array}{c}\Cs\Bs \\ \Bs(\Sc^\star\cap[1:j])\end{array}\right]\right)
\end{align*} 
since for $k\not\in\Sc^\star$ where $1\le k \le j$, 
\begin{align*}
\eb_k\in \Span\left(\left[\begin{array}{c}\Cs \\ \Is(\Sc^\star\cap[1:k])\end{array}\right]\right)
\end{align*}
according to Algorithm~\ref{alg:s-matrix}. 

Finally, since the relation holds for an arbitrary $\Cs$, we have shown the relation~\eqref{eq:relation1}.  

%
%

\section{Proof of Lemmas~\ref{lem:card1} and~\ref{lem:jt-lemma}}\label{app:proof_of_lemmas}
\subsection{Proof of Lemma~\ref{lem:card1}}
Recall the definition $\Bs(\Sc)=\Is(\Sc)\Bs$. 
First, we show that $\Bs(\Sc)\Mub$ is full rank. By assumption, $\rank(\Mub_\Bs)=r$, where $r\le L_\Bs$. Thus,
\begin{align*}
\rank\left(\left[\begin{array}{c} \Cs\Bs\Mub \\ \Is(\Sc)\Bs\Mub\end{array}\right]\right) = \rank\left(\left[\begin{array}{c} \mathbf{0} \\ \Is(\Sc)\Bs\Mub\end{array}\right]\right)=r.
\end{align*}
Next define
\begin{align*}
\Lc_{\Bs}(r, \Cs, \Sc) &= \{ \Mub_{\Bs(\Sc)}: \Mub_{\Bs(\Sc)}=\Bs(\Sc)\Mub, \Mub\in \Ic, \rank(\Mub_{\Bs})=r, \Cs\Mub_\Bs=\mathbf{0}\},\\
\bar{\Lc}_{\Bs}(r,\Sc) &= \{ \Mub_{\Bs(\Sc)}: \Mub_{\Bs(\Sc)}=\Bs(\Sc)\Mub, \Mub\in \Ic, \rank(\Mub_{\Bs(\Sc)})=r \}.
\end{align*}
Then,
\begin{align*}
|\Lc_{\Bs}(r, \Cs)| &\stackrel{(a)}{=} |\Lc_{\Bs}(r, \Cs, \Sc)| \quad \text{ for all } \Sc \text{ s.t. } \eqref{eq:S-cond} \text{ holds}\\
&\le |\bar{\Lc}_{\Bs}(r, \Sc)|,
\end{align*}
where $(a)$ follows from the fact that $\tilde \Cs := [\Cs^{\sf T}, \Is(\Sc)^{\sf T}]^{\sf T}$ is an invertible $L_\Bs\times L_\Bs$ matrix and $\Cs\Bs\Mub=\mathbf{0}$,
and thus, there is a one-to-one correspondence between $\Bs(\Sc) \Mub \Leftrightarrow\tilde \Cs \Bs \Mub \Leftrightarrow \Bs \Mub$.
The proof is a direct consequence of the fact that $\Bs(\Sc)\Mub$ is full rank and applying the following lemma on $\bar{\Lc}_{\Bs}(r, \Cs, \Sc)$.

\begin{lemma}
Consider a matrix $\Mub\in\Ic$, where the $k$-th row is $\mub_k\in[2^{n\Rt_k}]$.
Let $\Bs\in\Fq^{L_\Bs\times K}$, $1\le L_\Bs\le K$ be a full-rank matrix, and define
\begin{align*}
\Ac_{\Bs}(L_\Bs) = \{ \Mub_\Bs: \Mub_\Bs=\Bs\Mub,~ \Mub\in \Ic,~ \rank(\Mub_\Bs)=L_\Bs\}.
\end{align*}
Then,
\begin{align*}
|\Ac_{\Bs}(L_\Bs)| &\le \max_{\Tc}2^{n\Rt(\Tc)}
\end{align*}
where the maximum is over all $\Tc$ such that $|\Tc|=L_\Bs$ and $\rank([\Bs^{\sf T}, \Is(\Kc\setminus\Tc)^{\sf T}])=K$.
\end{lemma}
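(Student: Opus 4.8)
The plan is to drop the rank constraint defining $\Ac_\Bs(L_\Bs)$ and bound it by the cardinality of the full image $\Bs\Ic := \{\Bs\Mub : \Mub\in\Ic\}$, to compute $|\Bs\Ic|$ exactly by a column-by-column decomposition, and then to identify the resulting exponent with $\Rt(\Tc^\star)$ for an explicit admissible index set $\Tc^\star$. First I would record the structure of $\Ic$: writing $d_k := n\Rt_k/\log\q$, the $k$-th row $\mub_k=[\mv_k,\lv_k,\mathbf 0]$ of any $\Mub\in\Ic$ is precisely a vector of $\Fq^\kappa$ whose entries past the $d_k$-th coordinate vanish, so $\Ic$ is an $\Fq$-subspace of $\Fq^{K\times\kappa}$. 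Since $\Mub\mapsto\Bs\Mub$ is $\Fq$-linear, $\Bs\Ic$ is a subspace of $\Fq^{L_\Bs\times\kappa}$, and $\Ac_\Bs(L_\Bs)\subseteq\Bs\Ic$ yields $|\Ac_\Bs(L_\Bs)|\le|\Bs\Ic|$.

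Next I would compute $|\Bs\Ic|$ one column at a time. Put $S_j := \{k\in\Kc : d_k\ge j\}$. As $\Mub$ ranges over $\Ic$, its $j$-th column ranges freely over all vectors of $\Fq^K$ whose support is contained in $S_j$, and distinct columns vary independently; applying $\Bs$, the $j$-th column of $\Bs\Mub$ therefore ranges over the span of the columns of $\Bs$ indexed by $S_j$, a subspace of dimension $\rho_j := \rank(\Bs_{S_j})$ (where $\Bs_{S_j}$ denotes the submatrix formed by those columns). By independence of the columns, $|\Bs\Ic| = \prod_{j\ge1}\q^{\rho_j} = \q^{\sum_j\rho_j}$.

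It then remains to show $\sum_j\rho_j = \sum_{k\in\Tc^\star}d_k$ for some admissible $\Tc^\star$, that is, one with $|\Tc^\star|=L_\Bs$ and $\rank([\Bs^\tra,\Is(\Kc\setminus\Tc^\star)^\tra])=K$. Reordering the users so that $d_1\le\cdots\le d_K$, each $S_j$ becomes a suffix $\{k,k+1,\ldots,K\}$, and a summation-by-parts computation gives $\sum_j\rho_j = \sum_{k=1}^K d_k(\sigma_k-\sigma_{k+1})$, where $\sigma_k:=\rank(\Bs_{\{k,\ldots,K\}})$ and $\sigma_{K+1}:=0$. Each increment $\sigma_k-\sigma_{k+1}\in\{0,1\}$ equals $1$ exactly when the $k$-th column of $\Bs$ is not spanned by the later columns, so $\Tc^\star:=\{k:\sigma_k-\sigma_{k+1}=1\}$ is a column basis of $\Bs$ selected greedily from the right; hence $|\Tc^\star|=\sigma_1=\rank(\Bs)=L_\Bs$, the submatrix $\Bs_{\Tc^\star}$ is invertible, and $\sum_j\rho_j=\sum_{k\in\Tc^\star}d_k$. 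A one-line row reduction---using the standard-basis rows $\Is(\Kc\setminus\Tc)$ to clear the columns outside $\Tc$---shows that, given $|\Tc|=L_\Bs$, the condition $\rank([\Bs^\tra,\Is(\Kc\setminus\Tc)^\tra])=K$ is equivalent to $\Bs_{\Tc}$ being invertible, so $\Tc^\star$ is admissible (indeed it is the maximizer, since the greedy-from-the-right basis has maximal weight with weights $\Rt_k$). Recalling $\q^{d_k}=2^{n\Rt_k}$, we conclude $|\Ac_\Bs(L_\Bs)|\le|\Bs\Ic|=2^{n\Rt(\Tc^\star)}\le\max_\Tc 2^{n\Rt(\Tc)}$. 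I expect the crux to be the second and third steps together: making the column-wise counting airtight and then recognizing $\sum_j\rho_j=\sum_{j\ge1}\rank(\Bs_{\{k:d_k\ge j\}})$ as the total weight of a greedily chosen column basis (the summation-by-parts / matroid identity); everything else, including the equivalence of the two forms of the admissibility condition on $\Tc$, is routine bookkeeping.
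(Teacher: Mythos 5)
Your proof is correct, and it takes a genuinely different route from the paper's. The paper argues row by row: it first normalizes to the case where the rates are sorted in decreasing order and $\Bs$ is in reduced row echelon form (invertible row operations and a user permutation preserve the count), and then observes that the $j$-th row of $\Mub_\Bs$ inherits the zero-padding of the highest-rate user it involves, namely the pivot user $t_j$, so that row has at most $2^{n\Rt_{t_j}}$ values; the pivot set plays the role of your $\Tc^\star$ and is checked to be admissible. You instead work column by column in $\Fq^\kappa$, computing the image size \emph{exactly} as $\prod_j \q^{\rank(\Bs_{S_j})}$ with $S_j=\{k: d_k\ge j\}$, and then use the Abel-summation/matroid identity to recognize the exponent as the weight of the column basis of $\Bs$ chosen greedily from the highest-rate users. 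Your version buys an exact evaluation of $|\{\Bs\Mub:\Mub\in\Ic\}|$ (so the only slack in the lemma is discarding the constraint $\rank(\Mub_\Bs)=L_\Bs$ and passing to the max), and it makes visible that your $\Tc^\star$ actually attains the maximum; the paper's version is lighter on linear algebra, trading the summation-by-parts step for the RREF and reordering normalizations. Both arguments hinge on the same two routine facts --- that the image cardinality is controlled by a column basis of $\Bs$ weighted by the rates, and that admissibility of $\Tc$ is equivalent to invertibility of the corresponding $L_\Bs\times L_\Bs$ submatrix --- so either is a valid substitute for the other.
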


\begin{IEEEproof}
We will prove this upper bound by construction. First, we begin with a special case where
the rates are ordered by $\Rt_1 \ge \cdots \ge \Rt_K$ and $\Bs$ is in reduced row echelon form.

Let $\tilde\Tc = \{t_1, t_2, \ldots, t_{L_\Bs} \}$ be the set of pivot positions of $\Bs$.
Then the maximum number of non-zero entries in the $j$-th row of $\Mub_\Bs$ is 
the same as that in the $t_j$-th row of $\Mub$, which is given by
$\ceil{n \Rt_{t_j}  /\log(\q)}$ for $j = 1, \ldots, L_\Bs$.
In other words, the $j$-th row of $\Mub_\Bs$ has at most $2^{n\Rt_{t_j}}$ possibilities,
because the $t_j$-th row of $\Mub$ has at most $2^{n\Rt_{t_j}}$ possibilities.
Hence, $\Mub_\Bs$ has at most $2^{n\Rt(\tilde\Tc)}$ possibilities. This gives an upper 
bound for $|\Ac_{\Bs}(L_\Bs)|$ under the special case. Note that  $\tilde\Tc$ constructed above
satisfies the condition of $|\Tc|=L_\Bs$, $\rank([\Bs^{\sf T}, \Is(\Kc\setminus\Tc)^{\sf T}])=K$.
Therefore, 
\[
|\Ac_{\Bs}(L_\Bs)| \le 2^{n\Rt(\tilde\Tc)} \le \max_{\Tc}2^{n\Rt(\Tc)}.
\]
That is, the upper bound indeed holds for this special case.

Next, we consider a more general case where $\Rt_1 \ge \cdots \ge \Rt_K$ and $\Bs$ is not necessarily in reduced row echelon form.
Let $\mbox{RRE}(\Bs)$ be the reduced row echelon form of $\Bs$. Then 
$\mbox{RRE}(\Bs) = \Qb \Bs$ for some $L_\Bs \times L_\Bs$ invertible matrix $\Qb$.
Since $\Qb$ is invertible, the number of distinct $\Mub_\Bs$ is equal to the number of distinct $\mbox{RRE}(\Bs) \Mub$.
This reduces to our special case.

Finally, we consider the most general case where $\Rt_1,\ldots, \Rt_K$ can be in an arbitrary order. Then there exists a permutation $\pi : \mathcal{K} \to \mathcal{K}$
such that $\Rt_{\pi(1)} \ge \cdots \ge \Rt_{\pi(K)}$. In this case, we treat user $\pi(j)$ as our ``virtual" user $j$ and apply our previous argument to these virtual users.
In particular, we let $\tilde\Tc_{\pi}$ be the set of pivot positions of $\Bs$ with respect to the virtual users. Then, we have 
$|\Ac_{\Bs}(L_\Bs)| \le 2^{n\Rt(\tilde\Tc_{\pi})}$. Moreover, for any permutation $\pi$, $\tilde\Tc_{\pi}$ satisfies the condition of $|\Tc|=L_\Bs$, $\rank([\Bs^{\sf T}, \Is(\Kc\setminus\Tc)^{\sf T}])=K$.
Therefore, 
\[
|\Ac_{\Bs}(L_\Bs)| \le 2^{n\Rt(\tilde\Tc_{\pi})} \le \max_{\Tc}2^{n\Rt(\Tc)}.
\]
This completes the proof.
\end{IEEEproof}

\subsection{Proof of Lemma~\ref{lem:jt-lemma}}
Consider $1\le r \le L_\Bs$ and $\Cs \in\Fq^{(L_\Bs-r)\times L_\Bs}$ such that $\rank(\Cs)=L_\Bs-r$ and assume that $\Mub_\Bs \in \Lc_\Bs(r, \Cs)$.
Fix a set $\Sc\subseteq\Kc$ such that $|\Sc|=L_\Bs$ and
\begin{align}
\rank\left(\left[\Cs^{\sf T}, \Is(\Sc)^{\sf T}\right]\right)=L_\Bs.
\end{align}
Let $\tilde{\Ec}_1=\{U^n_k(0,0)\in\aep, k\in\Kc\}$.
Then, we have
\begin{align*}
\P&\{ (W_\Bs^n(\Mub_\Bs), Y^n)\in\aep, \Ec_1^c |\Mc \} \\
&\stackrel{(a)}{=}P\{(W_{\Bs(\Sc)}^n(\Mub_{\Bs(\Sc)}), W^n_{\Cs\Bs}, Y^n)\in\aep, \Ec_1^c |\Mc\} \\
&\stackrel{(b)}{\le} P\{(W_{\Bs(\Sc)}^n(\Mub_{\Bs(\Sc)}), W^n_{\Cs\Bs}, Y^n)\in\aep, \tilde{\Ec}_1|\Mc\} \\
&=\sum_{\uh_1^n\in\aep, \ldots, \uh_K^n\in\aep}\sum_{(w^n_{\Bs(\Sc)}, w^n_{\Cs\Bs}, y^n)\in\aep}  P\{W_{\Bs(\Sc)}^n(\Mub_{\Bs(\Sc)})=w^n_{\Bs(\Sc)}, U^n_\Kc(\zerob)=\uh_\Kc^n, W^n_{\Cs\Bs}=w^n_{\Cs\Bs}, Y^n=y^n |\Mc\} \\
&\stackrel{(c)}{=}\sum_{\uh_1^n\in\aep, \ldots, \uh_K^n\in\aep}\sum_{(w^n_{\Bs(\Sc)}, w^n_{\Cs\Bs}, y^n)\in\aep} P\{Y^n=y^n, W^n_{\Cs\Bs}=w^n_{\Cs\Bs} |U^n_\Kc(\zerob)=\uh_\Kc^n, \Mc\}\\
&\quad\times P\{W_{\Bs(\Sc)}^n(\Mub_{\Bs(\Sc)})=w^n_{\Bs(\Sc)}, U^n_\Kc(\zerob)=\uh_\Kc^n |\Mc\} \\
&\stackrel{(d)}{=}2^{n\Rh(\Kc)}\sum_{\uh_1^n\in\aep, \ldots, \uh_K^n\in\aep}\sum_{(y^n, w^n_{\Cs\Bs})\in\aep}p(y^n, w^n_{\Cs\Bs}|\uh_1^n, \ldots, \uh_K^n)\\
&\quad\times \sum_{w^n_{\Bs(\Sc)}\in\aep(W_{\Bs(\Sc)}| y^n,  w^n_{\Cs\Bs})}\P\{W_{\Bs(\Sc)}^n(\Mub_{\Bs(\Sc)})=w^n_{\Bs(\Sc)}, U^n_\Kc(\zerob)=\uh_\Kc^n \} \\
&\stackrel{(e)}{\le} 2^{n\Rh(\Kc)}\sum_{\uh_1^n\in\aep, \ldots, \uh_K^n\in\aep}\sum_{(y^n, w^n_{\Cs\Bs})\in\aep} p(y^n, w^n_{\Cs\Bs}|\uh_1^n, \ldots, \uh_K^n) 2^{n(H(W_{\Bs(\Sc)}|Y, W_{\Cs\Bs})+\d(\e))}\\
&\quad \times 2^{-n(H(W_{\Bs(\Sc)})+D(p_{W_{\Bs(\Sc)}}\| p^{|\Sc|}_\q)+\sum_{k\in\Kc}H(U_k)+\sum_{k\in\Kc}D(p_{U_k}\| p_\q)-\d(\e))} \\
&\le 2^{n\Rh(\Kc)}\sum_{\uh_1^n\in\aep, \ldots, \uh_K^n\in\aep}2^{n(H(W_{\Bs(\Sc)}|Y, W_{\Cs\Bs})+\d(\e))}\\
&\quad \times 2^{-n(H(W_{\Bs(\Sc)})+D(p_{W_{\Bs(\Sc)}}\| p^{|\Sc|}_\q)+\sum_{k\in\Kc}H(U_k)+\sum_{k\in\Kc}D(p_{U_k}\| p_\q)-\d(\e))} \\
&\le 2^{n\Rh(\Kc)}2^{n(H(W_{\Bs(\Sc)}|Y, W_{\Cs\Bs})+\d(\e))} 2^{-n(H(W_{\Bs(\Sc)})+D(p_{W_{\Bs(\Sc)}}\| p^{|\Sc|}_\q)+\sum_{k\in\Kc}D(p_{U_k}\| p_\q)-\d(\e))} \\
&= 2^{-n(I(W_{\Bs(\Sc)}; Y, W_{\Cs\Bs})+D(p_{W_{\Bs(\Sc)}}\| p^{|\Sc|}_\q)+\tilde{D}-\d(\e)))}
\end{align*}
where $U^n_\Sc(\mathbf{0})=(U^n_k(\mathbf{0}): k\in\Sc)$, step $(a)$ follows from the fact that $W^n_{\Bs}$ and $(W^n_{\Bs(\Sc)}, W^n_{\Cs\Bs})$ are deterministic functions of each other, step $(b)$ follows from the fact that $\aepvar\subseteq\aep$, step $(c)$ follows from the fact that conditioned on $\Mc$, $(Y^n, W^n_{\Cs\Bs})\to U^n_\Kc(\mathbf{0})\to W^n_{\Bs(\Sc)}(\Mub_{\Bs(\Sc)})$ forms a Markov chain, step $(d)$ follows from~\cite[Lemma 11]{Lim--Feng--Pastore--Nazer--Gastpar2018}, and step $(e)$ follows from~\cite[Lemma 7]{Lim--Feng--Pastore--Nazer--Gastpar2018}.

\bibliographystyle{IEEEtran}
\newcommand{\noopsort}[1]{}

\end{document}